\newcommand*\colvec[3][]{
    \begin{pmatrix}\ifx\relax#1\relax\else#1\\\fi#2\\#3\end{pmatrix}
}
\newcommand*\colvecfive[5][]{
    \begin{pmatrix}\ifx\relax#1\relax\else#1\\\fi#2\\#3 \\#4\\#5\end{pmatrix}
}
\theoremstyle{remark}
\newtheorem{definition}{Definition}
\theoremstyle{definition}
\newtheorem{proposition}{Proposition}
\theoremstyle{proposition}
\theoremstyle{lemma}
\theoremstyle{theorem}
\DeclareMathOperator{\E}{\mathbb{E}}
\begin{document}

\title{Collateral Unchained: Rehypothecation networks, concentration and systemic effects}

\author[1,2]{Duc Thi Luu}
\author[2,3]{Mauro Napoletano \thanks{Corresponding author: \texttt{mauro.napoletano@sciencespo.fr}}}
\author[4,5]{Paolo Barucca}
\author[5]{Stefano Battiston}
\affil[1]{Department of Economics,  University of Kiel,  Germany}
\affil[2]{OFCE-Sciences Po}
\affil[3]{SKEMA business school and Universit\'{e} C\^{o}te d’Azur}
\affil[4]{London Institute for Mathematical Sciences, London, UK}
\affil[5]{ Department of Banking and Finance, University of Zurich, Switzerland}

\date{January 31, 2018}

\maketitle

\begin{abstract} 
\noindent We study how network structure affects the dynamics of collateral in presence of rehypothecation. 
We build a simple model wherein banks interact via chains of repo
contracts and use their proprietary collateral or re-use the
collateral obtained by other banks via reverse repos. In this
framework, we show that total collateral volume and its velocity
are affected by characteristics of the network like the length of
rehypothecation chains, the presence or not of chains having a cyclic
structure, the direction of collateral flows, the density of the
network. In addition, we show that structures where collateral flows
are concentrated among few nodes (like in core-periphery networks) allow 
large increases in collateral volumes already with small network density. 
Furthermore, we introduce in the model collateral hoarding rates determined 
according to a Value-at-Risk (VaR) criterion, and we then study the
emergence of collateral hoarding cascades in different networks. Our
results highlight that network structures with highly concentrated
collateral flows are also more exposed to large collateral hoarding
cascades following local shocks. These networks are therefore characterized
by a trade-off between liquidity and systemic risk. 
\medskip

\noindent Keywords: Rehypothecation, Collateral, Repo Contracts, Networks, Liquidity, Collateral-Hoarding Effects, Systemic Risk.

\noindent JEL Codes: G01, G11, G32, G33.

\end{abstract}

\newpage

\onehalfspacing

\section{Introduction}
\label{sec:intro}

This paper investigates the collateral dynamics when banks are connected in a
network of financial contracts and they have the ability to rehypothecate the collateral along chains of contracts.
Collateral is of increasing importance for the functioning of the
global financial system. One reason for this is that the non-bank/bank
nexus has become considerably more complex over the past two decades,
in part because the separation between hedge funds, mutual funds,
insurance companies, banks, and broker/dealers has become blurred as a
result of financial innovation and deregulation
\citep[][]{singh2016collateral,pozsar2011nonbank}. Another reason for
the significant increase in collateral volumes \citep[comparable to M2
until the recent financial crisis, see e.g.][]{singh2011velocity} has
been the diffusion of rehypothecation agreements. The role of
collateral in lending agreements is to protect the lender against a
borrower's default. Rehypothecation\footnote{Throughout this paper,
  the terms ``re-use'', ``rehypothecation'', and ``re-pledge'' are
  interchangeable.} consists in the right of the lender to re-use the
collateral to secure another transaction in the future
\citep[see][]{monnet2011rehypothecation}.

Rehypothecation of collateral has clear advantages for liquidity in
modern financial systems \citep[see][]{FSB2017}. In particular, it
allows parties to increase the availability of assets to secure their
loans, since a given pool of collateral can be re-used to support
different financial transactions. As a result, rehypothecation
increases the funding liquidity of agents \citep[see][]{brunnermeier2008market}.  At the same time,
rehypothecation also implies risks for market players. First, one risk
associated with the additional funding liquidity allowed by
rehypothecation can be the building-up of excessive leverage in the market
\citep[see
e.g.][]{bottazzi2012securities,singh2012deleveraging,Capel2014collateral}. Second,
rehypothecation implies that several agents are counting on the same
set of collateral to secure their transactions. 
It follows that rehypothecation may represent yet another channel through which
agents' balance sheets become interlocked\footnote{See for
  \citet{battiston2016complexity} for an account of the sources of
  banks' interconnectedness in financial systems.}, and thus a source
of distress propagation and of systemic risk. For instance in the face
of idiosyncratic shocks, some institutions may start to precautionarily
hoard collateral, which in turn constrains the availability of
collateral and its re-use for the downstream institutions in chains of
repledges. This may consequently lead to an inefficient market freeze
if participants lack the necessary assets to secure their loans
\citep{Leitner2011,monnet2011rehypothecation,gorton2012securitized}. The latter is the distress channel
we focus on in this paper. 

To analyze economic benefits and systemic consequences of
rehypothecation, we develop a model of collateral dynamics over a
network of repurchase agreements (repos) across banks. To keep the
model as simple as possible we abstract from many features of markets
with collateral, and we assume that the amount of collateral available
for repo financing is set as a constant fraction of total collateral
available to each agent. The latter includes the proprietary
collateral endowment of each bank as well as the collateral obtained
from other banks via reverse repos. Although simple, our model allows
us to highlight what features of rephypothecation network topology
determine (i) the overall volume of collateral in the market, and (ii)
the velocity of collateral \citep{singh2011velocity}.  We show that
both variables are an increasing function of the length of open
chains.  However, for a given length, cyclic chains (i.e. those where
banks are organized in a closed chain of repo contracts) produce higher
collateral than a-cyclic chains. Furthermore, we show that the
direction of collateral flows also matters. In particular,
concentrating collateral flows among few nodes organized in a cyclic
chain allows large increases in collateral volume and in velocity even
with small chains' length. Finally, we investigate total collateral
under some typical network architectures, which capture different
modes of organization of financial relations in markets, and in
particular different degrees of heterogeneity in the
distribution of repo contracts and of collateral flows.  We show that
total collateral is an increasing function of the density of financial
contracts both in the random network (where heterogeneity is mild) and in
the core-periphery network (where heterogeneity is high).  However,
core-periphery structures allow for a faster increase in collateral. 

The above-explained model with exogenous levels of collateral hoarding rates is useful to analyse
the effects of network topology on collateral flows. At the same time
it is unfit to study the systemic risk implications of rehypothecation
as hoarding behaviour could also reflect the liquidity position of
agents in the network.  We thus extend the basic model, to introduce
hoarding behaviour that accounts for liquidity risk.  More precisely,
we assume that hoarding rates are set according to a Value-at-Risk
(VaR) criterion, aimed at minimizing liquidity default risk. In this
framework, we show that the equilibrium hoarding rate of each bank is
a function of the hoarding rates and the collateral levels of the
banks at which it is directly and indirectly connected. This
introduces important collateral hoarding externalities in the
dynamics, as an increase in hoarding at some banks may indirectly
cause higher hoarding at other banks even not directly
connected to it.  We then use the extended model to study the impact
on total collateral losses of small uncertainty shocks hitting a
fraction of banks in the network, and how those losses vary with the
structure of the rehypothecation networks. We show that core-periphery
structures are the most exposed to large collateral losses when shocks
hit the central nodes in the network, i.e. the one concentrating
collateral flows.  As, core-periphery are also the structures that
generate larger collateral volumes,  our results highlight that
these structures are characterized by a trade-off between liquidity and systemic risk. 

Our work contributes to the recent theoretical literature on the
consequences of collateral rehypothecation \citep[see
e.g.][]{bottazzi2012securities,andolfatto2017rehypothecation,GottardiMaurinMonnet2017,singh2016collateral}. This
literature has highlighted the role of rehypothecation in determining
repo rates \citep[e.g.][]{bottazzi2012securities}, or in softening
borrowing constraints of market participants and in shaping the
interactions in repo markets
\citep{GottardiMaurinMonnet2017,andolfatto2017rehypothecation} or,
finally, it has contributed to evaluate some welfare aspects of policies aimed at
regulating rehypotheaction
\citep{andolfatto2017rehypothecation}. However, to the best of our
knowledge, our paper is the first to study the role of the
structure of the network of collateral exchanges and to explore how
different network structures determine overall collateral
volumes and velocity. Furthermore, our work contributes also to the literature on
liquidity hoarding cascades, and it is particular related to the work
of \citet[][]{gai2011complexity}. However, different from this work,
our model introduces hoarding rates that are responsive to the
liquidity position of the single bank and to the position occupied in
the network. In addition, it shows that liquidity hoarding
dynamics can have quite different consequences depending on the
particular structure of the network.

The paper is organized as follows. Section \ref{sec:model} introduces
the basic definitions used throughout the paper and the model with
fixed hoarding rates. Section \ref{sec:rehyp-netw-endog} studies in
detail how the structure of rehypothecation networks determines
collateral volume and its velocity. Next, Section \ref{sec: Net
  liquidity position, Value-at-Risk, and hoarding effects} extends the
model to feature time-varyng hoarding rates determined according to a
VaR criterion. Section \ref{sec:coll-hoard-casc} uses the latter model
to study collateral hoarding cascades in different rehypothecation
networks. Finally, Section \ref{Conclusions} concludes, also by
discussing some implications of our work.

\section{A model of collateral dynamics on networks}
\label{sec:model}

\noindent In this section we build the network model that we then use
the analyse the ability of the financial system to generate endogenous
collateral in presence of rehypothecation and, next, to study the
dynamics of collateral hoarding cascades in presence of shocks. We
start with basic definitions that we shall use throughout the
paper. We then introduce the laws governing collateral dynamics in
presence of rehypothecation and of fixed hoarding coefficients by
banks. 

\subsection{Definitions}
\label{sec:definitions}

\noindent Consider a set of $N$ financial institutions (``banks'' for
brevity  in the following). Banks invest into an external asset, that
yields an exogenously fixed return $r_{EA}$, and that can also be used
as a collateral. In, addition they lend to each other by using only
secured loans that involve exchange of collateral as in
\citet{singh2011velocity}.\footnote{See also \citet {Aguiar_Map_Collateral2016} for a more comprehensive discussion of the structure of collateral flows.} More precisely, we assume that all debt contracts  are ``repo'' contracts, they are thus secured by collateral. A ``repo'' or  ``repurchase agreement'', is the sale of securities together with an agreement for the seller to buy back the securities at a later date.\footnote{ The repurchase price should be greater than the original sale price, the difference effectively representing interest, and sometimes called the repo rate.} 
A  ``reverse repo'' is the same contract from the point of view of the
buyer. The haircut rate of a repo, that we denote as $h$ , is a
percentage that is subtracted from the market value of an asset that
is being used in a repo transaction.\footnote{The size of the haircut
  usually reflects the perceived risk associated with holding the
  asset.} To collect funds via repo contracts each bank $i$  ($1\leq i \leq N$) can use the collateral that has in its ``box''. The box includes both the proprietary collateral or the collateral obtained via reverse-repos, which can then be re-pledged or rehypothecated for further repo transations.
Repo transactions among banks using proprietary and non-properietary collateral give rise to a directed network $\mathcal {G}$, that we shall label ``rehypothecation network''. To explain in details the dynamics of collateral in each bank's box and the timing of the events occurring through the network  $\mathcal {G}$  it is useful to define the following notations:
\begin{itemize}
\item $A^{C^{out}}_i$: the total amount of collateral flowing out of bank $i$'s box at each step, i.e. the total amount of collateral that the bank $i$ uses to obtain loans from other banks.  
\item $A^{C^{rm}}_i$: the total amount of (re-pledgeable) collateral remaining inside  the box. 
\item $A^C_i$: the total amount of (pledgeable)  collateral flowing into the box of the bank $i$. At every step, the collateral that flows into the box must equal the collateral that remains in the box plus the collateral that flows out of the box. Hence, we have   $A^C_i=A^{C^{out}}_i+A^{C^{rm}}_i$. Notice that $A^C_i$ includes both proprietary as well as  non-proprietary assets received from other banks.  
\item $A^0_i$:  the value of the  proprietary collateral of the bank $i$. This means  the bank $i$ is the original owner of $A^0_i$.
\item $B_i$: the borrowers' set of  bank $i$, i.e. the banks that obtained funding from $i$ via repos and thus provided collateral to $i$.  In the rehypothecation network,  $B_i$ is also the ``in-neighborhood'' of  $i$.
\item $L_i$: the lenders' set of  bank $i$, i.e. the banks  that obtained collateral from $i$ and thus provided funding to $i$. In the rehypothecation network,  $L_i$ is  also the ``out-neighborhood'' of  $i$.
\item Unless specified otherwise, each of the above mentioned symbols written without indices denotes the vector of the same variable for all the banks in the system, for instance  $A^C = [A^C_1, A^C_2, ... A^C_i, ... A^C_N ] $, and similarly for the other quantities.  
\end{itemize}
Furthermore, let the variable $a_{i\leftarrow j}$ capture the direction of collateral flow from the bank $j$ to the bank $i$.  In particular, for every pair of banks $i$ and $j$, $a_{i\leftarrow j}=1$ if bank $j$ has given collateral to bank $i$ and $a_{i\leftarrow j}=0$ otherwise. Two additional variables related to the direction of collateral flows are the ``out-degree'' of a bank $i$, $k_i^{out}$, which measures the total number of outgoing links of the bank, and thus the number of banks to whom bank $i$ provided collateral to. Likewise, the ``in-degree'' of a bank $i$, $k_i^{in}$, is the total number of banks that provided collateral to $i$.

Finally, we assume that each bank hoards a fraction $1-\theta_i$ of the collateral it has in the box. More precisely, for
every monetary unit of collateral, the bank $i$ will keep
$(1-\theta_i)$ inside its box and give away $\theta_i$.  Moreover, to keep the model simple we assume that each bank homogeneously spreads its non-hoarded collateral across its lenders.
Let $s_{i\leftarrow j}$ be the share of bank $j$'s outgoing collateral
flowing into the box of the bank $i$.  If $L_j= \varnothing$
(i.e. $k_j^{out}=0$), then all shares $s_{i\leftarrow j}$ are equal to
zero. If the lender's set is not void, $L_j \neq \varnothing $, that is if $k_j^{out}>0$, then for the total
outgoing collateral pledged or re-pledged by bank $j$, $A^{C^{out}}_j$ it holds:
\[
A^{C^{out}}_j = \sum_{i\in L_j} s_{i\leftarrow j}A^{C^{out}}_j,
\]
since the shares $s_{i\leftarrow j}$ satisfy the constraint $\sum_{i\in L_j} s_{i\leftarrow j}=1$. Notice, that this means that each non-zero column of the matrix of shares  $\mathcal {S}= \{ s_{i\leftarrow j} \}_{N\text{x}N}$ associated with the network $\mathcal {G}$  is summing to 1. In addition, recall that collateral is spread homogenously across lenders.  This implies that
\[
s_j=\frac{1}{k_j^{out}}.
\]
and that the elements of  the matrix $\mathcal {S}$ can be expressed as
\begin{equation}
\begin {cases}
s_{i\leftarrow j}=\frac{a_{i\leftarrow j}}{k_j^{out}}, \ \mbox{if} \ k_j^{out}>0, \\
s_{i\leftarrow j}=0, \ \mbox{otherwise}. \\
\end {cases}
\label{eq_ weights_size_2} 
\end{equation}

\subsection{Collateral dynamics}
\label{sec:coll-dynam-with-const}

To describe collateral dynamics in our model let us assume, in line with \citet{bottazzi2012securities} that the amount of collateral that can be re-hypothecated never exceeds the haircutted amount of collateral. Furthermore, let us assume for simplicity that the haircut rate $h \in [0,1]$ is the same for all banks.  
On these grounds, we can write the following expression for the dynamics of  $A^{C^{out}}_i$, the total amount  collateral  flowing out of  the box of the bank $i$: 
\begin{equation}
A^{C^{out}}_i=A^{0^{out}}_i+(1-h) \delta_i \theta_i  \sum_{j \in B_i} s_{i\leftarrow j}A^{C^{out}}_j,
\label{eq_collateral_box_out}
\end{equation}
where $A^{0^{out}}_i= \delta_i \theta_i A_i^0$ is the proprietary amount of outgoing collateral of the bank $i$. The parameter $\theta_i$ accounts for the fraction that is not hoarded. The second term of the equation captures the amount of collateral received by $i$ from its borrowers $j$ and that is re-pledged. Notice that bank $i$ can only re-pledge a fraction $(1 - h) \theta_i$ of what it receives, where $(1 - h)$ accounts for the fraction remaining after the haircut is applied, and $\theta_i$ accounts for what is not hoarded. Finally, $\delta_i$ is an indicator equal to one if bank $i$ engages in at least one repo contract so that its out-degree is positive and equal to zero otherwise (i.e $\delta_i= 1, \ \mbox {if} \  k_i^{out}>0,$ and zero otherwise). 

Similarly,  the dynamics of the total amount of  re-pledgeable  collateral remaining inside the box of the bank $i$ is described by the following equation 
\begin{equation}
A^{C^{rm}}_i=A^{0^{rm}}_i+(1-h) (1-\delta_i \theta_i)  \sum_{j \in B_i} s_{i\leftarrow j}A^{C^{out}}_j,
\label{eq_collateral_box_rm}
\end{equation}
where $A^{0^{rm}}_i= {A^{0}_i- A^{0^{out}}_i}=(1- \delta_i \theta_i) A_i^0$ is the initial remaining collateral. \\

\noindent Uses and re-use of collateral in our model are fully described by the recursive process explained by the above two equations. Notice that both Equation (\ref{eq_collateral_box_out}) and (\ref{eq_collateral_box_rm}) imply that at the initial step every bank $i$  gives away   $A^{0^{out}}_i$  of collateral to its outgoing neighbors  and keeps $A^{0^{rm}}_i$ inside  its box. In addition, for an amount of $s_{i\leftarrow j}A^{C^{out}}_j$ that the bank  $i$ receives from a neighbour $j$, it re-pledges  $(1-h)\delta_i \theta_i   s_{i\leftarrow j}A^{C^{out}}_j$ and hoards an amount $[1-(1-h)\delta_i \theta_i]  s_{i\leftarrow j}A^{C^{out}}_j$. However, only  the amount $(1-h) (1-\delta_i \theta_i)  s_{i\leftarrow j}A^{C^{out}}_j$ of this hoarded collateral is further re-pledgeable to obtain further funding later, because the haircutted amount of collateral, $h s_{i\leftarrow j}A^{C^{out}}_j$, is kept in a segregated account  that can be only accessed in the case of a credit event \citep[see][for details]{bottazzi2012securities}.


We can also determine the expression of total amount of  re-pledgeable   collateral flowing into the box of each bank  $i$:
\begin{equation}
A^{C}_i= A^{C^{out}}_i+ A^{C^{rm}}_i=A^{0}_i+ (1-h)   \sum_{j \in B_i} s_{i\leftarrow j}A^{C^{out}}_j.
\label{eq_collateral_box_in_1}
\end{equation}
The last equation makes clear that the total collateral flow in the box of a bank,  $A^{C}_i$  includes the proprietary assets (i.e. $A^{0}_i$) as well as  re-pledgeable non-proprietary assets  (i.e. $(1-h)   \sum_{j \in B_i} s_{i\leftarrow j}A^{C^{out}}_j$) received from other banks via reverse repos. Notice that $A^{C^{rm}}_i= (1-\delta_i \theta_i)  A^{C}_i$ and that $A^{C^{out}}_i= \delta_i \theta_i A^{C}_i$. Substituting the latter expression in equation (\ref{eq_collateral_box_in_1}) we obtain a system of equations in the variables $A^{C}$: 
\begin{equation}
A^{C}_i=A^{0}_i+ (1-h) \sum_{j \in B_i} s_{i \leftarrow j} \delta_j \theta_j A^{C}_j.
\label{eq_collateral_box_in_2}
\end{equation}

So far we have not said anything about timing in our model. However, the possibility of collateral use and re-use changes over time as the inflows and outflows of collateral in a bank's box change over time as a consequence of the different uses and re-uses of collateral made by other banks in the network. In addition, the very possibility of re-using collateral is clearly constrained by the maturity of a repo contract $T_{repo}$. In what follows, we shall assume that the maturity of repo contracts is longer than the time scale of the rehypothecation process.\footnote{Notice that in many cases, the rehypothecation process ends already after a small number of steps, typically smaller than the number of banks in the system. In addition, the effective duration of the rehypothecation process exponentially decreases with the levels of the non-hoarding rates and of the haircut rates (see also next section).}

Moreover we shall focus on equilibrium collateral. This equilibrium corresponds to the amount of collateral flow generated by the system over an infinite amount of steps of collateral uses and re-uses. Finally, we shall focus henceforth only on the equilibrium value of the outflowing collateral $A^{C^{out}}$ (equilibrium collateral henceforth). Indeed, first, via Equation \ref{eq_collateral_box_in_1} remaining collateral $A^{C^{rm}}$ is also determined in equilibrium once the amount of outflowing collateral $A^{C^{out}}$ and the initial proprietary collateral $A^{0}$ are known. Second, outflowing collateral is a very interesting variable in our model, as it captures each bank's contribution to overall collateral flows, and thus to overall funding liquidity in the market. 

To find the equilibrium of outflowing collateral, let us start by writing equation \eqref{eq_collateral_box_out} in matrix form
\begin{equation}
A^{C^{out}}=A^{0^{out}}+(1-h)\mathcal {M}A^{C^{out}}
\label{Eq:outgoing_collateral_matrix}
\end{equation}
where the the elements $ \mathcal {M}$ is the adjacency matrix of the rehypothecation network  $\mathcal {G}$, with elements $m_{i\leftarrow j}$ defined as:
\[
\begin{cases}
m_{i\leftarrow j}=\delta_i \theta_i  s_{i\leftarrow j} =\frac{\delta_i \theta_i a_{i\leftarrow j}}{k_j^{out}}, \ \mbox {if} \  k_j^{out}>0,\\ 
m_{i\leftarrow j}=0, \ \mbox {if} \  k_j^{out}=0. \\ 
\end{cases}
\]
Given the network of collateral flows $\mathcal {G}$, the haircut rate $h$, and the vector of non-hoarding rates $\theta= \{ \theta_i \}_{i=1}^{n}$, we can obtain the equilibrium value of $A^{C^{out}}$ by solving equation  (\ref{Eq:outgoing_collateral_matrix}) as follows:
\begin{equation}
A^{C^{out}}=(\mathcal {I}-(1-h) \mathcal {M})^{-1}A^{0^{out}}=\mathcal {B}_1 A^{0^{out}},
\label{eq_tot_outflow_collateral}
\end{equation}
with $\mathcal {I}$ is the identity matrix of size $N$,  $\mathcal {B}_1=(\mathcal {I}-(1-h)  \mathcal {M})^{-1}$. The above equation indicates that equilibrium collateral will in general be a function of the 
of the entire topology of the rehypothecation network $\mathcal {G}$ and of the vector of non-hoarding rates $\theta= \{ \theta_i \}_{i=1}^{n}$. In the next sections we first study the role of the network topology in affecting collateral flows and in determining different levels of equilibrium collateral. 

\section{Rehypothecation networks and endogenous collateral}
\label{sec:rehyp-netw-endog}

We shall now describe how the structure of the rehypothecation network affects collateral flows and equilibrium collateral determined according to the model developed in the previous section. To perform our investigation it is useful to define some aggregate indicators mesuring the performance of a network in affecting collateral flows. The first one is the aggregate amount  of outgoing collateral, $S^{out}$, or ``total collateral'' henceforth, which is defined as: 
\begin{equation}
S^{out}=\sum_{i=1}^{i=N} A^{C^{out}}_i,
\label{eq_aggregate_outflow}
\end{equation}
In addition, we also introduce the multiplier of the aggregate amount of proprietary collateral\footnote {See \cite {FSB2017_measure} for discussions of other collateral re-use measures.}, $m$, or ``collateral multiplier'', henceforth. It is defined as:
\begin{equation}
m= \frac {\sum_{i=1}^{i=N} A^{C^{out}}_i} {\sum_{i=1}^{i=N} A^{0^{out}}_i} = \frac {S^{out}}{S^{0^{out}}}, 
\label{eq_multiplier_outflow}
\end{equation}
where $S^{0^{out}}=\sum_{i=1}^{i=N} A^{0^{out}}_i$ is the total outflowing proprietary collateral. Throughout the paper,  we shall focus on   $S^{out}$ and  $m$ when analyzing collateral creation allowed by a given network $\mathcal{G}$. Notice that  $S^{out}$ captures the aggregate flow of collateral provided by banks in the financial system. A higher (lower) amount of this flow indicates a more liquid market, i.e. one where agents can easily find collateral to secure their financial transactions. Furthermore, notice that  the denominator on the right hand side of equation (\ref{eq_multiplier_outflow}) is the aggregate amount of initial ouflowing collateral. Accordingly, $m$ captures the velocity of collateral when rehypothecation is allowed \citep[see also][]{singh2011velocity}. Again, a higher value of $m$ indicates a more liquid market, and in particular one where the same set of collateral can secure a larger set of secured lending contracts. In that respect, we shall also say that a rehypothecation network $\mathcal{G}$ generates ``endogenous collateral'' whenever the collateral multiplier associated with it is larger ($m>1$)
 Clearly, both performance indicator are affected by the hoarding behavior of banks in the network. To simplify the analysis in this section we shall assume the non-hoarding and hoarding rates are fixed, so that $\theta_i=\bar{\theta}$ and $(1-\theta_i)=(1-\bar{\theta})$, $\forall i$. In Section 4, we shall remove this restriction, and we shall discuss how banks set these coefficients endogenously, according to a VaR criterion in presence of liquidity shocks. 


We shall begin our analysis by providing stylized examples and by stating proposition that show how the aggregate amount of collateral going out of the boxes of all banks, $S^{out}$, and the multiplier of collateral, $m$, are influenced by some key characteristics of rehypothecation networks, like the length of rehypothecation chains, the presence of cyclic chains, or the direction of collateral flows. In additions, these results shed light on the core mechanisms driving the generation of endogenous collateral and of collateral hoarding cascades in more complex network architectures.



\subsection{Length of chains and network cycles}
\label{sec:length-chains-cycles}
Let us start with simple examples of chains composed by three banks as
shown in Figure \ref {fig_three_nodes}:  panel (a) a star chain, panel
(b) an open chain or ``a-cyclic'' chain, and panel (c) a closed chain or a ``cycle''.

\begin{figure}[H]
\centering
\captionsetup[subfloat]{farskip=0pt,captionskip=0pt}
\includegraphics[width=15cm]{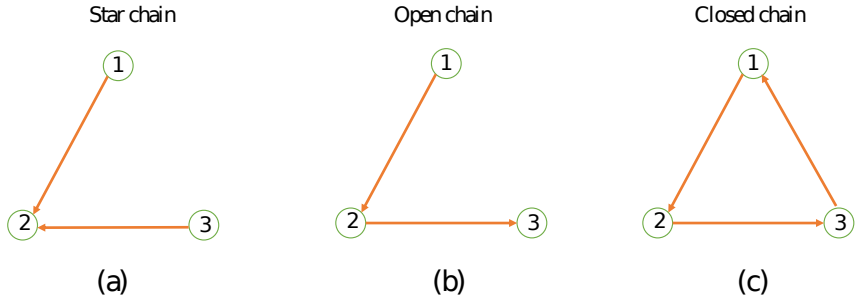}
   \vspace*{-3.5cm}
\caption{Examples of rehypothecation chains among three banks.} 
\label{fig_three_nodes}
\end{figure}

\bigskip

\textit {Star chains} 

\bigskip

In the first case (i.e. the star chain in  Figure \ref
{fig_three_nodes} (a)),  B2 receives collateral from  B1 and B3, and
then it does not re-use it.  We will show that without
rehypothecation, there is no endogenous collateral created in the
system. Before discussing the example,  we denote by  $A^{C^{out}}_{i, T}$  the cumulated amount of collateral ougoing from the box of the bank $i$ after $T$ times. In addition, $ S_{a, T}^{out}$ indicates total collateral after $T$ times and $ m_{a, T}$ the corresponding multiplier.\footnote {Notice that $A^{C^{out}}_{i}$  mentioned in equation (\ref{eq_tot_outflow_collateral}) is equilibrium collateral, and it corresponds to the amount of outflowing  collateral when $T \to \infty$.}
At $t = 1$, the initial  total  amounts of collateral  outgoing from the boxes of   B1, B2, B3 are
\begin{equation}
\begin{cases}
A_{1,t = 1}^{C^{out}}=A^{0^{out}}_1=  \theta_1A_{1}^{0}  \ \mbox {(going to bank $2$)}  \\  
A_{2,t = 1}^{C^{out}}=A^{0^{out}}_2=0\\ 
A_{3,t = 1}^{C^{out}}=A^{0^{out}}_3=\theta_3A_{3}^{0} \ \mbox {(going to bank $2$)}  \\ 
\end{cases}
\end{equation}
At $t= T \geq 2$,  $A^{C^{out}}_i$ remains constant   $\forall i$ since there is no re-use of collateral. We can write
\begin{equation}
\colvec[A^{C^{out}}_{1,t= T}]{A^{C^{out}}_{3,t= T}}{A^{C^{out}}_{3,t= T}}= \colvec[A^{0^{out}}_1]{A^{0^{out}}_2}{A^{0^{out}}_3} + (1-h) M_a
 \colvec[A^{0^{out}}_1]{A^{0^{out}}_2}{A^{0^{out}}_3},
\end{equation}
where  \[M_a=\begin{bmatrix}
    0       &  0 &   0  \\
0           & 0 & 0  \\
   0      & 0  & 0
\end{bmatrix}.\]  
It follows that total collateral in the example of Figure \ref {fig_three_nodes} (a) is always equal to the sum of the initial prorietary collateral outflowing from banks' boxes and thus, that there is no creation of endogenous collateral. That is, we get:  
\begin{equation}
S_{a}^{out}= S_{a, T}^{out}= \sum_{i=1}^{i=3} A^{C^{out}}_{i, T}= A^{0^{out}}_1 + A^{0^{out}}_3,
\end{equation}
and
\begin{equation}
 m_{a}= m_{a, T}= \frac {\sum_{i=1}^{i=3} A^{C^{out}}_{i,  T}}{\sum_{i=1}^{i=3} A^{0^{out}}_i} =1.
\end{equation}

\bigskip

\textit {A-cyclic chains} 

\bigskip

We now consider  the second example represented by the open chain or a-cyclic chain  in  Figure \ref {fig_three_nodes} (b). In this case  B2 can re-use the collateral that it receives from  B1.  In presence of rehypothecation the network generates endogenous collateral, $S^{out}>S^{0^{out}}$. However, the possibilities of endogenous collateral creation are constrained by the length of the open chain (and equal to 2 in the example shown in the figure).
At $t = 1$, the initial amounts of collateral  outgoing from the boxes of  B1, B2, B3 are
\begin{equation}
\begin{cases}
A_{1,t = 1}^{C^{out}}=A^{0^{out}}_1 =  \theta_1A_{1}^{0}  \ \mbox {(going to bank $2$)}  \\ 
A_{2,t = 1}^{C^{out}}=A^{0^{out}}_2=\theta_2 A_2^0  \ \mbox {(going to bank $3$)}  \\ 
A_{3,t = 1}^{C^{out}}=A^{0^{out}}_3=0\\ 
\end{cases}
\end{equation}
At $t =  2$,    bank $2$ will re-use a fraction $\theta_2$ of an additional  (re-pledgeable)  collateral
 that it has received from the bank $1$ at time $t=1$. Therefore, the amounts of collateral outgoing from the boxes of   B1, B2, B3 are
\begin{equation}
\begin{cases}
A_{1,t = 2}^{C^{out}}=  A^{0^{out}}_1\\ 
A_{2,t = 2}^{C^{out}}= A^{0^{out}}_2 +  (1-h)    \theta_2 A^{0^{out}}_1\\ 
A_{3,t = 2}^{C^{out}}=  A^{0^{out}}_3\\ 
\end{cases}
\end{equation}
which in matrix form reads
\begin{equation}
\colvec[A^{C^{out}}_{1,t = 2}]{A^{C^{out}}_{3,t = 2}}{A^{C^{out}}_{3,t = 2}}= \colvec[A^{0^{out}}_1]{A^{0^{out}}_2}{A^{0^{out}}_3} + (1-h) M_b
 \colvec[A^{0^{out}}_1]{A^{0^{out}}_2}{A^{0^{out}}_3}, 
\end{equation}
where  now \[M_b=\begin{bmatrix}
    0       &  0 &   0  \\
\theta_2           & 0 & 0  \\
   0      & 0  & 0
\end{bmatrix}.\] 
Since  all elements of $M_b^{t}$ are equal to zero for all $t \geq 2$, we get that the equilibrium values of total outflowing collateral and of the corresponding multiplier are:
 \[ \colvec[A^{C^{out}}_{1,t = T}]{A^{C^{out}}_{2,t = T}}{A^{C^{out}}_{3,t = T}}=\colvec[A^{C^{out}}_{1,t = 2}]{A^{C^{out}}_{3,t = 2}}{A^{C^{out}}_{3,t = 2}} (\forall T \geq 2). \] 
In addition,
\begin{equation}
S_{b}^{out}= S_{b,T}^{out}= \sum_{i=1}^{i=3} A^{C^{out}}_{i, t=T}= A^{0^{out}}_1 + A^{0^{out}}_2 +  \theta_2  (1-h)  A^{0^{out}}_1,
\end{equation}
and
\begin{equation}
 m_{b}=  m_{b,T}= \frac {\sum_{i=1}^{i=3} A^{C^{out}}_{i, t=T}}{\sum_{i=1}^{i=3} A^{0^{out}}_i} =1 +  \theta_2  (1-h)  \frac {A^{0^{out}}_1} {A^{0^{out}}_1 + A^{0^{out}}_2}.
\end{equation}
Notice that the above collateral multiplier is larger than 1 as long as $h<1$ and $\theta_2>0$.

\bigskip

\textit{Cyclic chains}

\bigskip

We now consider the third case when rehypothecation processes among  banks create a closed chain or a ``cycle'', like the one in  Figure \ref {fig_three_nodes} (c).
Notice that in the above example every bank has a positive out-degree, i.e. $k^{out}_i >0, \forall i=1, 2, 3$ and accordingly, $\delta_i =1, \forall i=1, 2, 3$ (cf. Section \ref{sec:coll-dynam-with-const}). We will now show that the creation of endogenous collateral is no longer constrained by the length of the chain, and thus that in the end total collateral and the multipliers are larger than in previous example.

At $t=1$,   the initial total  amounts of outgoing collateral are:
\begin{equation}
\begin{cases}
A^{C^{out}}_{1,t=1}=A^{0^{out}}_1  = \theta_1 A^{0}_{1} \ \mbox {(going to bank $2$)}  \\ 
A^{C^{out}}_{2,t=1} =A^{0^{out}}_2 =  \theta_2 A^{0}_{2} \ \mbox {(going to bank $3$)}  \\ 
A^{C^{out}}_{3,t=1} =A^{0^{out}}_3  =\theta_3 A^{0}_{3}\ \mbox {(going to bank $1$)} \\ 
\end{cases}
\end{equation}
Furthermore, at $t=2$,  each bank $i$ will re-use a fraction $\theta_i$ of the additional  re-pledgeable collateral that it has received from other banks at the previous time. We thus get:
\begin{equation}
\begin{cases}
A^{C^{out}}_{1,t=2}= A^{0^{out}}_1+  \theta_1 (1-h)  A^{0^{out}}_3   \\ 
A^{C^{out}}_{2,t=2}= A^{0^{out}}_2 +  \theta_2  (1-h)  A^{0^{out}}_1  \\ 
A^{C^{out}}_{3,t=2}=A^{0^{out}}_3 +    \theta_3  (1-h)  A^{0^{out}}_2 \\ 
\end{cases}
\end{equation}
and in matrix form 
\begin{equation}
\colvec[A^{C^{out}}_{1,2}]{A^{C^{out}}_{3,2}}{A^{C^{out}}_{3,2}}= \colvec[A^{0^{out}}_1]{A^{0^{out}}_2}{A^{0^{out}}_3} + (1-h)M_c
 \colvec[A^{0^{out}}_1]{A^{0^{out}}_2}{A^{0^{out}}_3},
\end{equation}
where \[M_c=\begin{bmatrix}
    0       &  0 &   \theta_1  \\
\theta_2           & 0 & 0  \\
   0      & \theta_3  & 0
\end{bmatrix}.\]
Notice that $\theta_1 (1-h) A^{0^{out}}_3$, $ \theta_2  (1-h) A^{0^{out}}_1$,  $ \theta_3 (1-h)  A^{0^{out}}_2$ are respectively the additional amounts of collateral that banks $1$, $2$, and $3$ receive from all other banks. Moreover, at $t=3$,  each bank $i$ will again re-use a fraction $\theta_i$ of the additional   re-pledgeable collateral that it has received from other banks at time $t=2$. Therefore,
\begin{equation}
\begin{cases}
A^{C^{out}}_{1,t=3}= A^{0^{out}}_1+  \theta_1 A^{0^{out}}_3 (1-h) + \theta_1  (1-h)  \theta_3 (1-h) A^{0^{out}}_2\\ 
A^{C^{out}}_{2,t=3}= A^{0^{out}}_2 +   \theta_2  A^{0^{out}}_1 (1-h)  + \theta_2  (1-h)  \theta_1 (1-h) A^{0^{out}}_3 \\ 
A^{C^{out}}_{3,t=3}=A^{0^{out}}_3 +     \theta_3 A^{0^{out}}_2 (1-h) + \theta_3  (1-h) \theta_2  (1-h) A^{0^{out}}_1\\ 
\end{cases}
\label{eq:closed_chain_system}
\end{equation}
which in matrix form reads
 \begin{equation}
\colvec[A^{C^{out}}_{1,3}]{A^{C^{out}}_{3,3}}{A^{C^{out}}_{3,3}}= \colvec[A^{0^{out}}_1]{A^{0^{out}}_2}{A^{0^{out}}_3} + [(1-h) M_c]^1
 \colvec[A^{0^{out}}_1]{A^{0^{out}}_2}{A^{0^{out}}_3} + [(1-h) M_c]^2
 \colvec[A^{0^{out}}_1]{A^{0^{out}}_2}{A^{0^{out}}_3}.
\end{equation}
In general, at $t=T+1$, i.e. after $T$ times of collateral re-uses, the cumulated amounts of outgoing collateral are:
\begin{equation}
\colvec[A^{C^{out}}_{1,T+1}]{A^{C^{out}}_{2,T+1}}{A^{C^{out}}_{3,T+1}}= \{ I+[(1-h) M_c]^1+ [(1-h) M_c]^2+...+[(1-h) M_c]^T \}
\colvec[A^{0^{out}}_1]{A^{0^{out}}_2}{A^{0^{out}}_3}.
\end{equation}
Expressed differently,
\begin{equation}
\colvec[A^{C^{out}}_{1,T+1}]{A^{C^{out}}_{2,T+1}}{A^{C^{out}}_{3,T+1}}= \colvec[A^{0^{out}}_1]{A^{0^{out}}_2}{A^{0^{out}}_3} + (1-h) M_c \colvec[A^{C^{out}}_{1,T}]{A^{C^{out}}_{2,T}}{A^{C^{out}}_{3,T}}.
\end{equation}
Clearly, the additional collateral in the system is now equal to
\[[((1-h) M_c)^1+ ((1-h) M_c)^2+.....((1-h) M_c)^T
]\colvec[A^{0^{out}}_1]{A^{0^{out}}_2}{A^{0^{out}}_3}.\] 

Finally, when $T \to \infty$, we obtain the equilibrium values for $S_{c}^{out}$  and for $m_{c}$. Their expressions are the following:
\begin{equation}
S_{c}^{out}= \lim_{t \to\infty} S_{c, t}^{out}=\lim_{t \to\infty} \sum_{i=1}^{i=3} A^{C^{out}}_{i, t},
\end{equation}
and
\begin{equation}
 m_{c}=  \lim_{t \to\infty} m_{c, t}= \lim_{t \to\infty} \frac {\sum_{i=1}^{i=3} A^{C^{out}}_{i, t}}{\sum_{i=1}^{i=3} A^{0^{out}}_i}.
\end{equation}
To conclude, it is interesting to notice that, as long as $\{ A_i^0 \}_{i=1}^{i=3}$ and $\{ \theta_i \}_{i=1}^{i=3}$ are homogenous across banks, we have the following ranking  $S_c^{out} >S_b^{out}> S_a^{out}$ and $m_c >m_b> m_a$.

\subsection{Direction of collateral flows, collateral sinks and cycles' length} 
\label{sec:the role of the direction of collateral flows}

The above examples have clarified what are the fundamental properties that a rehypothecation network must have in order to create endogenous collateral, and thus additional liquidity in the system. In particular, the second example (the a-cyclic chain) makes clear that the possibilities of additional collateral creation are determined by the length of the repledging chains among banks. However, the presence of cycles in networks, like the third example above, allows one to go beyond that, and to maximize the amount of collateral creation. Furthermore, in presence of cyclic networks the direction of collateral flows also matters. In particular, networks wherein collateral flows all end up in a cycle will ceteris paribus create more endogenous collateral than networks where some collateral leaks out from cycles and sinks in some nodes of the system.  To better clarify the foregoing statement, we consider in Figure \ref {fig_five_nodes}, three example networks of five nodes with the same number of links. In addition, in these networks, all out-degrees are positive, and consequently collateral from each node will flow into a cycle after going through some directed edges\footnote{It is also possible to show that, as long as all agents have positive out-degree, collateral flows will end up in a cycle after a finite number of steps. For the sake of brevity we do not report this proposition and the related proof. However, it is available from the authors upon request.}, and there is no leakage from the cycle. The only difference among the three networks is in the length of cycles. Nevertheless, we will show that as long as non-hoarding rates are constant and homogeneous, the three networks generate the same equilibrium total collateral and the have the same equilibrium multiplier. This is summarized in the following proposition.

\begin{proposition}
Let $\theta_i =\theta \ (\forall i)$. For the networks in panels  $\alpha=a, b, c$ of Figure (\ref{fig_five_nodes}) $S_{\alpha, t}^{out}= \sum_{i=1}^{i=5} A^{C^{out}}_{i, t}, \;\;\ \forall t\geq 1$.  
\label{prop:independence-of-cycles-length} 
\end{proposition}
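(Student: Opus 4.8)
The plan is to exploit the fact that, under the proposition's hypotheses, the aggregate collateral depends on the network only through a single scalar, the common non-hoarding rate $\theta$, so that the three topologies become indistinguishable at the aggregate level. First I would fix the common structure shared by the three panels. Since every node has positive out-degree we have $\delta_i=1$ for all $i$, and the homogeneity assumption $\theta_i=\theta$ turns the recursion matrix into $\mathcal{M}=\theta\,\mathcal{S}$, where $\mathcal{S}=\{s_{i\leftarrow j}\}$ is the matrix of shares. Crucially, because no node has zero out-degree, $\mathcal{S}$ has no null column, and by \eqref{eq_ weights_size_2} each column $j$ sums to $\sum_i a_{i\leftarrow j}/k_j^{out}=k_j^{out}/k_j^{out}=1$; that is, $\mathcal{S}$ is column-stochastic, a property that holds identically in panels $a$, $b$, and $c$ regardless of how the edges are actually wired.

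Next I would write the finite-horizon collateral explicitly. Iterating Equation (\ref{Eq:outgoing_collateral_matrix}) exactly as in the cyclic example gives, for every $t\geq 1$,
\begin{equation}
A^{C^{out}}_{t}=\sum_{k=0}^{t-1}\bigl[(1-h)\mathcal{M}\bigr]^{k}A^{0^{out}},
\label{eq:partial-sum}
\end{equation}
so that the quantity to be compared across panels is $S^{out}_{\alpha,t}=\mathbf{1}^{\top}A^{C^{out}}_{t}$, where $\mathbf{1}$ denotes the all-ones vector. The claim therefore reduces to showing that $\mathbf{1}^{\top}A^{C^{out}}_{t}$ does not depend on the particular $\mathcal{M}$ attached to each panel.

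The crux of the argument is that $\mathbf{1}$ is a left eigenvector of $\mathcal{M}$ with eigenvalue $\theta$. Indeed, column-stochasticity of $\mathcal{S}$ gives $\mathbf{1}^{\top}\mathcal{S}=\mathbf{1}^{\top}$, hence $\mathbf{1}^{\top}\mathcal{M}=\theta\,\mathbf{1}^{\top}$, and by induction $\mathbf{1}^{\top}\mathcal{M}^{k}=\theta^{k}\mathbf{1}^{\top}$ for every $k\geq 0$. Substituting this into \eqref{eq:partial-sum} collapses each matrix power to a scalar:
\begin{equation}
S^{out}_{\alpha,t}=\mathbf{1}^{\top}A^{C^{out}}_{t}=\sum_{k=0}^{t-1}\bigl[(1-h)\theta\bigr]^{k}\,\mathbf{1}^{\top}A^{0^{out}}=S^{0^{out}}\sum_{k=0}^{t-1}\bigl[(1-h)\theta\bigr]^{k},
\label{eq:collapse}
\end{equation}
with $S^{0^{out}}=\sum_{i}A^{0^{out}}_i=\theta\sum_i A^0_i$. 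The right-hand side involves only $h$, $\theta$, $t$, and the total proprietary endowment, none of which distinguishes the three panels; since these share the same nodes and the same proprietary collateral, $S^{out}_{a,t}=S^{out}_{b,t}=S^{out}_{c,t}$ for all $t\geq 1$, and letting $t\to\infty$ yields the equality of equilibrium totals (and hence of multipliers, since $m_{\alpha,t}=S^{out}_{\alpha,t}/S^{0^{out}}$ is itself panel-independent).

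The main obstacle is the one conceptual step \eqref{eq:collapse} rests on: verifying that the all-ones vector is a left eigenvector with a topology-independent eigenvalue. This is exactly where the two hypotheses do their work, since positivity of all out-degrees rules out null columns in $\mathcal{S}$ (which would otherwise break column-stochasticity and make $\mathbf{1}^{\top}\mathcal{M}$ collapse only partially), and homogeneity of $\theta$ makes the eigenvalue the same scalar $\theta$ in all three panels. I would take care to emphasize that the argument never uses cycle length, link orientation within cycles, or any other structural feature distinguishing $a$, $b$, and $c$: the only inputs are the column sums of $\mathcal{S}$ and the common value of $\theta$, which is precisely why differing cycle lengths leave aggregate collateral unchanged.
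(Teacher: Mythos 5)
Your proof is correct and follows essentially the same route as the paper's: the paper likewise reduces all three panels to the common scalar recursion $S_{\alpha,t+1}^{out}=S_{\alpha,1}^{out}+(1-h)\theta S_{\alpha,t}^{out}$ with the common initial value $S_{\alpha,1}^{out}=\theta\sum_i A_i^0$, which is exactly what your left-eigenvector identity $\mathbf{1}^{\top}\mathcal{M}=\theta\,\mathbf{1}^{\top}$ produces when applied to the partial geometric sum. The only difference is presentational: the paper writes out the panel-specific matrices $M_a, M_b, M_c$ and then asserts the aggregate identity as ``easily shown,'' whereas you derive it once and for all from the column-stochasticity of $\mathcal{S}$ under positive out-degrees and homogeneous $\theta$.
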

\begin{proof}
 See appendix. 
\end{proof}

\begin{figure}[H]
\centering
\captionsetup[subfloat]{farskip=0pt,captionskip=0pt}
\includegraphics[width=15cm]{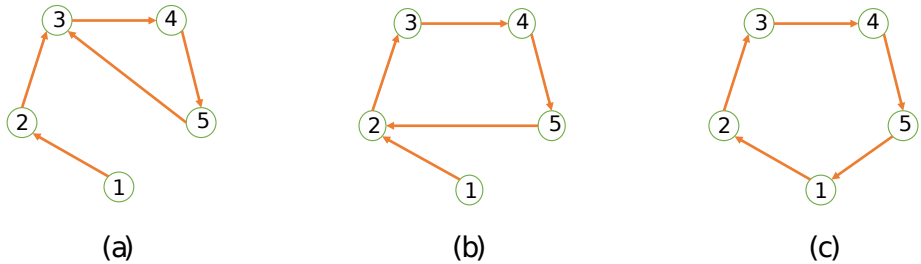}
   \vspace*{-4cm}
\caption{Different rehypothecation process among five banks: same number of links, different lengths of cycles but same total amounts of equilibrium collateral. }
\label{fig_five_nodes}
\end{figure}

Let us next consider three other examples of rehypothecation process among five banks, the ones in Figure (\ref{fig_five_nodes_2}). The difference with examples of Figure \ref{fig_five_nodes} is that now in panels (a) and (b) one node (i.e.  node 5) has zero out-degree. In addition, the network structure in these two panels imply that some collateral leaks out of a cycle and gets stuck at the node with zero out-degree, which then plays the role of ``collateral sink''. The consequence is that the three networks will generate different amounts of total equilibrium collateral. This is stated in the following proposition.
\begin{proposition}
  Let $\theta_i =\theta \ (\forall i)$. For the networks in panels  $\alpha=a, b, c$ of Figure \ref{fig_five_nodes_2},  $S_{a, t}^{out} < S_{b, t}^{out} < S_{c, t}^{out}, \;\;\, \forall t\geq 2.$
 \label{prop:dependence-cycles-length}
\end{proposition}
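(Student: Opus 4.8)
The plan is to reduce the comparison to a term-by-term inequality between finite walk expansions. Writing $\mathbf{1}$ for the all-ones row vector and reusing the partial-sum identity already derived for the cyclic example, for each panel $\alpha\in\{a,b,c\}$ the cumulated outflow at step $t$ is
\begin{equation*}
A^{C^{out}}_{\alpha,t}=\sum_{k=0}^{t-1}\big[(1-h)M_\alpha\big]^k A^{0^{out}}_\alpha,
\qquad\text{so}\qquad
S_{\alpha,t}^{out}=\sum_{k=0}^{t-1}(1-h)^k\,\mathbf{1}M_\alpha^{k}A^{0^{out}}_\alpha=:\sum_{k=0}^{t-1}c_{\alpha,k}.
\end{equation*}
Since $t$ is finite no convergence is needed, and it suffices to compare the three expansions coefficient by coefficient in the order $k$.

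The key step is to read the coefficient $c_{\alpha,k}$ as a discounted count of collateral that has not yet leaked into the zero-out-degree node. Because $\theta_i=\theta$ is homogeneous and $m_{i\leftarrow j}=\delta_i\theta s_{i\leftarrow j}$, the $(i,j)$ entry of $M_\alpha^{k}$ sums, over all directed walks $j=v_0\to v_1\to\cdots\to v_k=i$, the weights $\theta^{k}\prod_{l=1}^k s_{v_l\leftarrow v_{l-1}}$, with the convention that any walk visiting the sink at some step $l\ge 1$ is annihilated by the factor $\delta_{v_l}=0$. Summing over the terminal node and using $\sum_i s_{i\leftarrow v}=1$ at every non-sink $v$, one obtains
\begin{equation*}
\mathbf{1}M_\alpha^{k}A^{0^{out}}_\alpha=\theta^{k}\sum_{j}A^{0^{out}}_{\alpha,j}\,q^{(k)}_{\alpha,j},
\end{equation*}
where $q^{(k)}_{\alpha,j}\in[0,1]$ is the probability that a unit of collateral released at $j$ and routed along the shares $s$ has not reached the sink within $k$ steps; thus $q^{(0)}_{\alpha,j}=1$ always, and $q^{(k)}_{c,j}=1$ for all $k$ in panel $c$, which has no sink. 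Each coefficient is therefore the initial injection weighted by a survival probability that can only be eroded by the presence and the proximity of the sink.

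With this representation the ordering follows by comparing injections and survival probabilities. At order $k=0$ we have $c_{\alpha,0}=S^{0^{out}}_\alpha=\theta\sum_j\delta_j^{(\alpha)}A^0_j$; since node $5$ has zero out-degree (hence $A^{0^{out}}_{5}=0$) in panels $a$ and $b$ but positive out-degree in $c$, the common endowment $A^0$ gives $c_{a,0}=c_{b,0}<c_{c,0}$, already separating $c$ from the others. For $k\ge 1$ the sink-free panel $c$ has the maximal survival $q^{(k)}_{c,j}=1$, so $c_{c,k}$ dominates; the two sink networks are then distinguished by where collateral exits the cycle, with $a$ leaking nearer the injection points than $b$, giving $q^{(k)}_{a,j}\le q^{(k)}_{b,j}$ at every order and strict inequality first at $k=1$. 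Summing $k=0,\dots,t-1$ yields $S^{out}_{a,t}\le S^{out}_{b,t}\le S^{out}_{c,t}$, and the strict $k=1$ terms make both inequalities strict as soon as $t\ge 2$. This also explains the threshold: at $t=1$ panels $a$ and $b$ share the same proprietary outflow and tie, because no re-pledging step has yet exposed the different sink positions.

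The main obstacle is the final step: establishing the weak per-order inequalities $q^{(k)}_{a,j}\le q^{(k)}_{b,j}\le 1$ and identifying the first order at which they become strict requires explicit bookkeeping of the sink-survival probabilities in each specific five-node topology, i.e.\ tracking at which step collateral first leaves the cycle in panels $a$ and $b$. The remaining care is purely in this accounting: one must check that all coefficients stay weakly ordered at every $k$ (so that no reversal occurs at high order) while the $k=1$ coefficients are strictly ordered, which is precisely what delivers strict inequalities for all $t\ge 2$ rather than only in the limit.
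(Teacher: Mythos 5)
Your proposal is correct and takes essentially the same route as the paper: the paper likewise writes the cumulated outflows as truncated sums $\sum_{k=0}^{t-1}\big[(1-h)M_\alpha\big]^{k}A^{0^{out}}_\alpha$ for the three explicit five-node matrices, attributes the ordering to where collateral leaks out of the cycle into the zero-out-degree node, and then simply asserts that $S_{a,t}^{out}<S_{b,t}^{out}<S_{c,t}^{out}$ ``can be proved by induction.'' Your survival-probability reading of $\mathbf{1}M_\alpha^{k}A^{0^{out}}_\alpha$ is exactly the bookkeeping device that makes that omitted induction transparent (and the residual per-order verification you flag is no larger a gap than the one in the paper's own proof), so the two arguments coincide in substance.
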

\begin{proof}
  See appendix.
\end{proof}

\begin{figure}[H]
\centering
\captionsetup[subfloat]{farskip=0pt,captionskip=0pt}
\includegraphics[width=15cm]{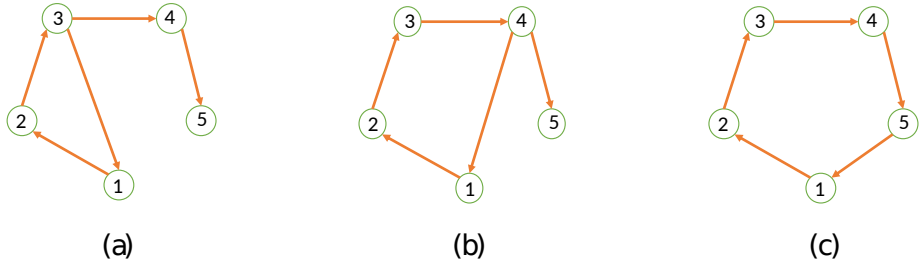}
   \vspace*{-4cm}
\caption{Different rehypothecation process among five banks: same number of links but different total amounts of equilibrium collateral. }
\label{fig_five_nodes_2}
\end{figure}

The above two propositions deliver interesting implications about the
role of networks' topology in determining total collateral
flows. First, Proposition \ref{prop:dependence-cycles-length} shows
that the total amount of collateral is maximized when the longest
possible cycle in the network has been created (a cycle of length 5 in
example (c) of Figure \ref{fig_five_nodes_2}). At the same time,
proposition \ref{prop:independence-of-cycles-length} shows that
cycles' length is irrelevant when collateral sinks are not present in
the network and all banks have positive out-degree, i.e. they have at
least one repo with some other bank in the system. It also follows
that in that case, it might be advantageous frow the viewpoint of
total creation of collateral in the network to concentrate collateral
flows among few nodes that have a cyclic chain among them. These
results provide key insights to understand the behaviour of collateral
flows in different network architectures that we shall discuss in the
next section. They are also central to understand some of the results
about collateral hoarding cascades that we will expose in Section
\ref{sec:coll-hoard-casc}. 

The next two propositions generalize the above results to any network architecture. The first of them shows that adding an arbitrary number of links to a cycle of length equal to the size of the network does not change neither total collateral nor the value of the multiplier. The second one identifies the upper bounds for the equilibrium values of total collateral and of the collateral multiplier and shows that these upper limits are attained as long as every bank has at least one outgoing link in the network. 

\begin{proposition}
Consider a rehypothecation network $\mathcal{G}$ of size $N$. Let $\theta_i =\theta$ and
$A_i^0=A^0  \;\ (\forall i)$. Adding arbitrary links to an initial cycle of size $N$ does not change  the values of $S^{out}$ and $m$.  More in general, as long as there is the presence of the largest cycle,  $S^{out}$ and $m$ remain unchanged as the density inside that cycle increases.
\label{prop:proposition_k_reg}
\end {proposition}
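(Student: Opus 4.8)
The plan is to work directly with the closed-form equilibrium expression \eqref{eq_tot_outflow_collateral}, namely $A^{C^{out}}=\mathcal{B}_1 A^{0^{out}}$ with $\mathcal{B}_1=(\mathcal{I}-(1-h)\mathcal{M})^{-1}$, and to show that the two aggregates $S^{out}=\sum_i A^{C^{out}}_i$ and $m=S^{out}/S^{0^{out}}$ depend on the network only through the fact that every node has positive out-degree, and never through which extra links are present. The first observation is that the presence of a cycle of size $N$ forces $k_i^{out}\geq 1$, hence $\delta_i=1$, for every bank, and this stays true after adding any further links; under the homogeneity assumptions $\theta_i=\theta$ and $A_i^0=A^0$ this gives $A^{0^{out}}=\theta A^0\,\mathbf{1}$, where $\mathbf{1}$ denotes the all-ones column vector.

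The key structural invariant is the set of column sums of $\mathcal{M}$. Since $m_{i\leftarrow j}=\theta\,a_{i\leftarrow j}/k_j^{out}$ for every $j$ (all out-degrees being positive), each column sum equals
\[
\sum_{i=1}^{N} m_{i\leftarrow j}=\frac{\theta}{k_j^{out}}\sum_{i=1}^{N}a_{i\leftarrow j}=\frac{\theta}{k_j^{out}}\,k_j^{out}=\theta,
\]
independently of how many links are added. In matrix form this reads $\mathbf{1}^{\top}\mathcal{M}=\theta\,\mathbf{1}^{\top}$, i.e. $\mathbf{1}^{\top}$ is a left eigenvector of $\mathcal{M}$ with eigenvalue $\theta$. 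From the identity $\mathbf{1}^{\top}(\mathcal{I}-(1-h)\mathcal{M})=[1-(1-h)\theta]\,\mathbf{1}^{\top}$, multiplying on the right by $\mathcal{B}_1$ yields $\mathbf{1}^{\top}\mathcal{B}_1=[1-(1-h)\theta]^{-1}\mathbf{1}^{\top}$. Hence
\[
S^{out}=\mathbf{1}^{\top}A^{C^{out}}=\mathbf{1}^{\top}\mathcal{B}_1 A^{0^{out}}=\frac{\theta A^0}{1-(1-h)\theta}\,\mathbf{1}^{\top}\mathbf{1}=\frac{N\theta A^0}{1-(1-h)\theta},
\]
and, since $S^{0^{out}}=N\theta A^0$, we get $m=S^{out}/S^{0^{out}}=[1-(1-h)\theta]^{-1}$. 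Both closed forms involve only $N$, $\theta$, $A^0$ and $h$, so adding arbitrary links (which leaves the full cycle, and therefore all positive out-degrees, intact) changes neither $S^{out}$ nor $m$. This is exactly the claim, and the ``more general'' statement about increasing density inside the largest cycle is the same argument.

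The one point that needs care — and the only real obstacle — is justifying that $\mathcal{B}_1$ exists, i.e. that $\mathcal{I}-(1-h)\mathcal{M}$ is invertible. Here I would note that $\mathcal{M}$ is entrywise nonnegative with all column sums equal to $\theta$, so its spectral radius is exactly $\theta$ (bounded above by the maximal column sum, and attained since $\theta$ is an eigenvalue via the left eigenvector $\mathbf{1}^{\top}$); consequently the spectral radius of $(1-h)\mathcal{M}$ is $(1-h)\theta<1$ whenever $h>0$ or $\theta<1$, which is precisely the regime in which the equilibrium of \eqref{Eq:outgoing_collateral_matrix} is well defined and the Neumann series $\sum_{k\geq 0}[(1-h)\mathcal{M}]^{k}$ converges. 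In that regime the same left-eigenvector identity $\mathbf{1}^{\top}\mathcal{M}^{k}=\theta^{k}\mathbf{1}^{\top}$ can alternatively be summed term by term to recover $S^{out}$ directly, which additionally shows that the structure-independence holds at every finite step $t$, not merely in the limit.
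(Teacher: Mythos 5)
Your proof is correct, and its engine is the same invariant the paper relies on: once the full-length cycle guarantees $k_i^{out}>0$ for every bank, the columns of the flow matrix all sum to the same constant, and aggregating over banks wipes out every trace of the detailed link structure. The paper, however, treats this proposition purely by reduction: it observes (with an illustrative figure) that adding links to an $N$-cycle preserves positive out-degrees, and then invokes Proposition \ref{prop:proposition_positive_out_degree_basic}, whose separate proof sums the fixed-point system $A^{C}_i=A^{0}_i+(1-h)\theta\sum_{j\in B_i}s_{i\leftarrow j}A^{C}_j$ componentwise and uses the fact that each column of $\mathcal{S}$ sums to one. Your version packages that same summation as the left-eigenvector identity $\mathbf{1}^{\top}\mathcal{M}=\theta\,\mathbf{1}^{\top}$ applied to the resolvent $\mathcal{B}_1$, and this buys three things the paper's treatment does not provide: (i) the argument is self-contained rather than deferred to a later proposition; (ii) you actually justify that $\mathcal{B}_1$ exists, via the spectral-radius computation $\rho(\mathcal{M})=\theta$ so that $\rho\bigl((1-h)\mathcal{M}\bigr)=(1-h)\theta<1$ — a point the paper silently assumes here (its only related spectral argument, a Perron--Frobenius bound, appears in the proof of Proposition \ref{proposition_solution_A_C_VaR} for a different matrix); and (iii) the identity $\mathbf{1}^{\top}\mathcal{M}^{k}=\theta^{k}\mathbf{1}^{\top}$ shows the invariance of total collateral at every finite step $t$, not merely at the equilibrium limit, which is the stronger, step-by-step form of the claim that the paper only states (in Proposition \ref{prop:independence-of-cycles-length}) for its five-node examples.
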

\begin{proof}
  See the appendix.
\end{proof}

\begin{proposition}
Consider a rehypothecation network $\mathcal{G}$ of size $N$. Let $\theta_i =\theta$ and
$A_i^0=A^0  \;\ (\forall i)$.  If   $k_i^{out} >0  \;\ (\forall i)$ then equilibrium values of $S^{out}$ and $m$ are equal to the following upper limits:
\[S^{out}= \frac  {\theta}{1-(1-h)\theta}  N A^0,\]
and
\[m =\frac  {1}{1-(1-h)\theta}.\] 
\label{prop:proposition_positive_out_degree_basic}
\end {proposition}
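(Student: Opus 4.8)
The plan is to exploit the closed-form equilibrium solution from Equation (\ref{eq_tot_outflow_collateral}), namely $A^{C^{out}}=(\mathcal{I}-(1-h)\mathcal{M})^{-1}A^{0^{out}}$, and to show that when every bank has positive out-degree the hypothesis forces a convenient structure on the column sums of $\mathcal{M}$. Since $\theta_i=\theta$ and $k_i^{out}>0$ for all $i$, we have $\delta_i=1$ for all $i$, so the entries become $m_{i\leftarrow j}=\theta\,a_{i\leftarrow j}/k_j^{out}$. The key observation is that each nonzero column $j$ of $\mathcal{M}$ sums to $\theta$: indeed $\sum_{i} m_{i\leftarrow j}=\theta\sum_{i}a_{i\leftarrow j}/k_j^{out}=\theta\,k_j^{out}/k_j^{out}=\theta$, using that $\sum_i a_{i\leftarrow j}=k_j^{out}$. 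Because $k_j^{out}>0$ for every $j$, \emph{all} columns sum to exactly $\theta$.

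\medskip

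First I would work directly with total collateral rather than the individual vector entries. Writing $\mathbf{1}$ for the all-ones row vector, the aggregate $S^{out}=\mathbf{1}\,A^{C^{out}}$ and likewise $S^{0^{out}}=\mathbf{1}\,A^{0^{out}}$. The plan is to left-multiply the equilibrium equation (\ref{Eq:outgoing_collateral_matrix}), $A^{C^{out}}=A^{0^{out}}+(1-h)\mathcal{M}A^{C^{out}}$, by $\mathbf{1}$ and use the column-sum fact $\mathbf{1}\mathcal{M}=\theta\,\mathbf{1}$, which holds precisely because every column sums to $\theta$. This gives the scalar identity
\begin{equation}
S^{out}=S^{0^{out}}+(1-h)\,\theta\,S^{out},
\end{equation}
from which $S^{out}=\dfrac{1}{1-(1-h)\theta}\,S^{0^{out}}$ follows immediately, and correspondingly $m=S^{out}/S^{0^{out}}=\dfrac{1}{1-(1-h)\theta}$. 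Finally, since $A^{0^{out}}_i=\delta_i\theta_i A_i^0=\theta A^0$ for every bank under the homogeneity assumptions $\theta_i=\theta$ and $A_i^0=A^0$, we have $S^{0^{out}}=N\theta A^0$, which yields the stated $S^{out}=\dfrac{\theta}{1-(1-h)\theta}\,NA^0$.

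\medskip

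The main obstacle to make this airtight is the convergence of the geometric (Neumann) series, i.e. justifying that $(\mathcal{I}-(1-h)\mathcal{M})^{-1}$ exists and equals $\sum_{k\geq0}[(1-h)\mathcal{M}]^k$, so that the equilibrium is well-defined and the left-multiplication is legitimate. I would handle this by noting that $\mathcal{M}$ is a nonnegative matrix whose column sums all equal $\theta\leq 1$, so its spectral radius is at most $\theta$; hence the spectral radius of $(1-h)\mathcal{M}$ is at most $(1-h)\theta<1$ whenever $h>0$ (and the same conclusion holds for $h=0$ provided $\theta<1$, which guarantees the denominator $1-(1-h)\theta$ is strictly positive). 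This bounds the spectral radius below $1$ and secures both invertibility and series convergence. A brief remark that the argument uses only the column-sum structure of $\mathcal{M}$—and not the cycle lengths or the fine topology—makes transparent why the upper bound is attained under the sole condition $k_i^{out}>0\ (\forall i)$, and connects the result cleanly to Propositions \ref{prop:independence-of-cycles-length}--\ref{prop:proposition_k_reg}.
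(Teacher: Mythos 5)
Your proof is correct and takes essentially the same route as the paper: both sum the equilibrium fixed-point equation over all banks and exploit the fact that, once every bank has positive out-degree, every column of the relevant matrix sums to $\theta$, yielding the scalar relation $S = S^{0} + (1-h)\theta S$ (the paper phrases this with the inflow variable $A^{C}$ and the share matrix $\mathcal{S}$, you with $A^{C^{out}}$ and $\mathcal{M}$, which are equivalent here since $A^{C^{out}}_i = \theta A^{C}_i$ when all $\delta_i=1$). Your additional spectral-radius justification of the invertibility of $\mathcal{I}-(1-h)\mathcal{M}$ is a welcome piece of rigor that the paper supplies only later, in the proof of Proposition \ref{proposition_solution_A_C_VaR}.
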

\begin{proof}
  See the appendix.
\end{proof}

\subsection {Network architecture and collateral creation}
\label{sec:netw-arch-coll}

\noindent We now address the issue of how does the network structure more in general affects collateral creation. We shall focus on three very different classes of network structures. These classes represent general archetypes of networks in the literature, and they also capture some idealized modes of organization of financial contracts in the market. The first class consists of the closed k-regular graphs of size $N$, $\mathcal{G}_{reg}$ , wherein each node has $k$ in-coming neighbors as well as $k$ out-going neighbors. This archetype corresponds to a market where repo contracts are homogenously spread across banks, so that each bank has exactly the same number of repos and of reverse repos. We consider different types of closed regular graphs of varying levels of density $\frac{1}{N-1}<p<1$. Special cases of this structure are the cycle  of size $N$ (where $p=\frac{1}{N-1}$) and the complete network ($p=1$), wherein each bank has a repo with every other bank in the network (and vice-versa) and where the number of incoming and outgoing links is the same for all banks and equal to $N-1$. The second class consists of the random graphs $\mathcal{G}_{rg}$, in which there is a mild degree of heterogeneity in the distribution of financial contracts across banks. Here, the probability of a directed link (and thus of the existence of repo) between every two nodes is equal to the density $p$ ($0\leq  p \leq 1$) of the network. Notice that as $p\to1$ the random graph converges to the complete graph. Finally, the third class we examine here consists of core-periphery networks, $\mathcal{G}_{cp}$, where (i)  the number of nodes in the core, $N_{core}$,  is fixed; (ii) each node in the periphery has only out-going links, and all point to nodes in the core; (iii) nodes in the core are also randomly connected among themselves with the probability $p_{core}$ ($0\leq  p_{core} \leq 1$), and there are no directed links from the core to the periphery nodes. Notice that the latter type of structure exacerbates heterogeneity in the distribution of financial contracts and it centralizes collateral flows among nodes in the core. This structure is also interesting from an empirical viewpoint as high concentration of collateral flows is often observed in actual markets \citep[see e.g.][]{singh2011velocity}. 

We begin our analysis of the three structures by characterizing the behaviour of endogenous collateral formation in the closed k-regular graphs.

\begin{proposition}
Let $\theta_i =\theta$ and $A_i^0=A^0  \;\ (\forall i)$. A closed-k regular rehypothecation network $\mathcal{G}_{reg}$ of size $N$, and a density $p=\frac{k}{N-1}$, always returns the same equilibrium values of total collateral $S_{reg}^{out}$ and of collateral multiplier $m_{reg}$ for any density $\frac{1}{N-1}<p<1$. The equilibrium values are given by the limits stated in Proposition \ref{prop:proposition_positive_out_degree_basic}. 
\label{prop:new_prop_k_reg}
\end{proposition}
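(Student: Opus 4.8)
The plan is to deduce this directly from Proposition \ref{prop:proposition_positive_out_degree_basic}, whose conclusion already gives network-independent equilibrium values. First I would observe that in a closed $k$-regular graph every node has out-degree exactly $k\ge 1$, so $k_i^{out}=k>0$ and therefore $\delta_i=1$ for all $i$. This is exactly the hypothesis $k_i^{out}>0\ (\forall i)$ required by Proposition \ref{prop:proposition_positive_out_degree_basic}. Applying that proposition then forces $S_{reg}^{out}=\frac{\theta}{1-(1-h)\theta}\,NA^0$ and $m_{reg}=\frac{1}{1-(1-h)\theta}$; since neither expression involves $k$, they coincide for every admissible density $p=\frac{k}{N-1}$ with $\frac{1}{N-1}<p<1$, which is the assertion.

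To make the argument self-contained rather than a black-box citation, I would next recall the linear algebra underlying those limits. With $\theta_i=\theta$ and $\delta_i=1$ for all $i$, the rehypothecation matrix reduces to $\mathcal{M}=\theta\,\mathcal{S}$, and because every node has positive out-degree each column of the share matrix $\mathcal{S}$ is nonzero and sums to one. Hence every column of $\mathcal{M}$ sums to $\theta$, i.e.\ $\mathbf{1}^{T}\mathcal{M}=\theta\,\mathbf{1}^{T}$: the all-ones vector $\mathbf{1}$ is a left eigenvector of $\mathcal{M}$ with eigenvalue $\theta$. Since $\mathcal{M}\ge 0$ admits the strictly positive left eigenvector $\mathbf{1}$, Perron--Frobenius gives $\rho(\mathcal{M})=\theta$, so $\rho\big((1-h)\mathcal{M}\big)=(1-h)\theta<1$ and the resolvent expands as the convergent Neumann series $(\mathcal{I}-(1-h)\mathcal{M})^{-1}=\sum_{n\ge 0}(1-h)^{n}\mathcal{M}^{n}$.

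Then I would left-multiply this series by $\mathbf{1}^{T}$ and use $\mathbf{1}^{T}\mathcal{M}^{n}=\theta^{n}\mathbf{1}^{T}$ to obtain $\mathbf{1}^{T}(\mathcal{I}-(1-h)\mathcal{M})^{-1}=\frac{1}{1-(1-h)\theta}\,\mathbf{1}^{T}$. Combining with $A^{0^{out}}=\theta A^{0}\mathbf{1}$ in Equation \eqref{eq_tot_outflow_collateral} yields $S_{reg}^{out}=\mathbf{1}^{T}A^{C^{out}}=\frac{\theta A^{0}N}{1-(1-h)\theta}$ and, dividing by $S^{0^{out}}=N\theta A^{0}$, the multiplier $m_{reg}=\frac{1}{1-(1-h)\theta}$, both manifestly free of $k$.

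I do not expect a genuine obstacle here, since the substantive work sits in Proposition \ref{prop:proposition_positive_out_degree_basic}: the only role of $k$-regularity is to certify that all out-degrees are positive and to express the density as $p=\frac{k}{N-1}$. The single step needing care is the convergence of the resolvent, which rests on $(1-h)\theta<1$; this holds automatically whenever $h>0$ or $\theta<1$ and is the standing condition guaranteeing a finite equilibrium. Everything else is the verification of the degree condition and of the column-sum (left-eigenvector) identity.
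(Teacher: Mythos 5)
Your proof is correct and takes essentially the same route as the paper: the paper formally derives this proposition from Proposition \ref{prop:proposition_k_reg} (a closed $k$-regular graph already contains the cycle of length $N$, and added links change nothing), but since Proposition \ref{prop:proposition_k_reg} is itself proved in the appendix as ``just a special case'' of Proposition \ref{prop:proposition_positive_out_degree_basic}, your direct verification that $k_i^{out}=k>0$ for all $i$ followed by an appeal to Proposition \ref{prop:proposition_positive_out_degree_basic} collapses to the same argument. Your supplementary Neumann-series computation is likewise a restatement of the paper's own proof of Proposition \ref{prop:proposition_positive_out_degree_basic} --- the identity $\mathbf{1}^{T}\mathcal{M}=\theta\,\mathbf{1}^{T}$ is exactly the paper's ``each non-zero column of $\mathcal{S}$ sums to one'' step --- with the welcome extra care of justifying invertibility via $\rho\bigl((1-h)\mathcal{M}\bigr)=(1-h)\theta<1$.
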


The above statement follows directly from Proposition \ref{prop:proposition_k_reg} above. A closed k-regular of density $\frac{1}{N-1}$ already embeds the longest possible cycle that is possible to create in a network of size $N$. Accordingly adding further links does not  bring any change in equilibrium values of total collateral and of the multipliers, which are always equal to the upper bounds stated in Proposition \ref{prop:proposition_positive_out_degree_basic}. In contrast to the close k-regular graph, the random network and of core-periphery networks display some variation in total collateral and in the multiplier with for increasing levels of density. The following proposition characterizes the behavior of the latter two variables in these two network structures.\footnote{In the next proposition we determine the equilibrium for bank's collaterals in the case of an average system, that is instead of considering each single sample of collateral networks we consider the expected value of the amount of collateral for each banks, and solve only for a single average system. For all cases, numerical evidence strongly supports the analytical results.} 
\begin{proposition}
Let $\theta_i =\theta$ and $A_i^0=A^0  \;\ (\forall i)$, then: 
\begin{enumerate}
\item A random graph $\mathcal{G}_{rg}$ of size $N$ creates more equilibrium collateral and a higher multiplier with higher level of  density $p$.
\item A core-periphery graph of $\mathcal{G}_{cp}$ of size $N$ creates  more equilibrium collateral and a higher multiplier with higher level of density $p_{core}$ of the core.
\item For any graph density $0\leq  p \leq 1$, $\exists \, p_{th}=p_{th}(N,N_{core},p)$ such that a core-periphery graph of $\mathcal{G}_{cp}$ creates more equilibrium collateral and a higher multiplier than random graphs $\mathcal{G}_{rg}$ for any $p_{core}>p_{th}$.
\item As the overall density $p$ in the random graph (or  the density  $p_{core}$ of the core in the core-periphery  graph) goes to 1, the equilibrium values of total collateral and the multiplier converge to the limits stated in Proposition \ref{prop:proposition_positive_out_degree_basic}.
\end{enumerate}
\label{prop:proposition_density_rand} 
\end {proposition}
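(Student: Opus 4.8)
The plan is to work with the \emph{average system} flagged in the footnote: rather than invert $(\mathcal{I}-(1-h)\mathcal{M})^{-1}$ for each realization, I replace every random quantity in the per-bank law (\ref{eq_collateral_box_out}) by its expectation over the random-graph ensemble and solve the resulting deterministic fixed point. The decisive simplification is that, within each class, all nodes are stochastically equivalent, so the $N$-dimensional system collapses to one scalar equation (random graph) or to a two-type system (core-periphery). Throughout I write $q:=1-(1-p)^{N-1}$ for the probability that a node has positive out-degree, so that $\E[\delta_i]=q$ and $A_i^{0^{out}}=\delta_i\theta A^0$, and I use $(1-h)\theta<1$ to guarantee the equilibrium is finite.

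\emph{Random graph (parts 1 and 4).} By symmetry I posit $\E[A_i^{C^{out}}]=\bar A$ for all $i$ and take expectations in (\ref{eq_collateral_box_out}). The indicator $\delta_i$ depends only on the out-links of $i$, whereas the incoming sum $\sum_j s_{i\leftarrow j}A_j^{C^{out}}$ depends on the in-links of $i$ and on the out-degrees of the \emph{other} nodes; these are independent in $\mathcal{G}_{rg}$, so the expectation of the product factorizes. The one genuinely non-trivial term is $\E[s_{i\leftarrow j}]=\E[a_{i\leftarrow j}/k_j^{out}]$: conditioning on $a_{i\leftarrow j}=1$ (probability $p$) gives $k_j^{out}=1+\mathrm{Bin}(N-2,p)$, and the reciprocal-moment identity $\E[(1+\mathrm{Bin}(n,p))^{-1}]=\frac{1-(1-p)^{n+1}}{(n+1)p}$ yields $\E[s_{i\leftarrow j}]=q/(N-1)$. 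Summing the $N-1$ such terms contributes a factor $q$, so the scalar equation reads $\bar A=q\theta A^0+(1-h)\theta q^2\bar A$, whence
\[
S^{out}_{rg}=N\bar A=\frac{Nq\theta A^0}{1-(1-h)\theta q^2}, \qquad m_{rg}=\frac{S^{out}_{rg}}{Nq\theta A^0}=\frac{1}{1-(1-h)\theta q^2}.
\]
Since $q$ is strictly increasing in $p$ and both $q\mapsto q/(1-(1-h)\theta q^2)$ (derivative $(1+(1-h)\theta q^2)/(1-(1-h)\theta q^2)^2>0$) and $q\mapsto(1-(1-h)\theta q^2)^{-1}$ are increasing on $[0,1]$, both $S^{out}_{rg}$ and $m_{rg}$ rise with $p$ (part 1); letting $p\to1$ sends $q\to1$ and recovers the limits of Proposition \ref{prop:proposition_positive_out_degree_basic} (part 4).

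\emph{Core-periphery (parts 2 and 4).} Here I use two representative values $\bar A_p$ and $\bar A_c$. A periphery node receives nothing and always has positive out-degree into the core, so $\bar A_p=\theta A^0$ is constant in $p_{core}$ and injects a fixed inflow $I_p$ into each core node. A core node links only to the core, so its probability of positive out-degree is $q_{core}:=1-(1-p_{core})^{N_{core}-1}$, and the core-to-core term reproduces the random-graph computation on $N_{core}$ nodes, contributing a factor $q_{core}$. The core equation becomes $\bar A_c=q_{core}\theta A^0+(1-h)\theta q_{core}\big[I_p+q_{core}\bar A_c\big]$, solving to
\[
\bar A_c=\frac{q_{core}\big[\theta A^0+(1-h)\theta I_p\big]}{1-(1-h)\theta q_{core}^2}=K\,\frac{q_{core}}{1-(1-h)\theta q_{core}^2}, \qquad K>0 \text{ constant}.
\]
As in the random case $\bar A_c$ increases in $q_{core}$, hence in $p_{core}$; since $S^{out}_{cp}=N_{core}\bar A_c+(N-N_{core})\theta A^0$ has a constant periphery part, $S^{out}_{cp}$ increases in $p_{core}$ and returns the Proposition \ref{prop:proposition_positive_out_degree_basic} limits as $q_{core}\to1$. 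The multiplier is the delicate case, because $S^{0^{out}}_{cp}=(N-N_{core})\theta A^0+N_{core}q_{core}\theta A^0$ also moves with $q_{core}$; I would write $m_{cp}=(1-(1-h)\theta q_{core}^2)^{-1}+R(q_{core})$ with $R\ge0$ and $R(1)=0$, and verify $dm_{cp}/dq_{core}>0$ directly on this rational function.

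Part 3 then follows from parts 2 and 4 together with the strict sub-optimality of the random graph at $p<1$: because $S^{out}_{cp}$ and $m_{cp}$ are continuous and increasing in $p_{core}$ and attain the global maxima of Proposition \ref{prop:proposition_positive_out_degree_basic} as $p_{core}\to1$, while $S^{out}_{rg}(p)$ and $m_{rg}(p)$ lie strictly below those maxima for every fixed $p<1$, the intermediate value theorem furnishes a threshold $p_{th}=p_{th}(N,N_{core},p)$ above which $\mathcal{G}_{cp}$ dominates. I expect the main obstacles to be (i) justifying the factorization used in the mean-field step and the reciprocal-moment identity for $\E[a_{i\leftarrow j}/k_j^{out}]$, and (ii) establishing monotonicity of the core-periphery multiplier, where numerator and denominator both vary with $q_{core}$; the remaining monotonicity and limit claims reduce to elementary calculus on the explicit expressions above.
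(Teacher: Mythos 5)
Your overall strategy --- replacing the random system by its ensemble average and collapsing it by node symmetry to one scalar equation for the random graph and a two-type system for the core--periphery graph --- is exactly the paper's. But your key mean-field recursion contains a genuine error: the factorization $\E[s_{i\leftarrow j}A^{C^{out}}_j]\approx\E[s_{i\leftarrow j}]\,\E[A^{C^{out}}_j]$ is invalid, because the two factors are strongly positively correlated through the same indicator $\delta_j$: $s_{i\leftarrow j}>0$ requires $a_{i\leftarrow j}=1$ and hence $k_j^{out}\geq 1$, while $A^{C^{out}}_j=\delta_j\theta A^C_j$ vanishes on the complementary event. Your factorization therefore charges the probability $q=1-(1-p)^{N-1}$ twice --- once inside $\E[s_{i\leftarrow j}]=q/(N-1)$ (your reciprocal-moment computation, which is itself correct and cleaner than the paper's) and once inside $\E[A^{C^{out}}_j]=q\theta\E[A^C_j]$ --- and this is precisely the origin of the spurious $q^2$ (and $q_{core}^2$) in your denominators. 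The paper avoids the trap by writing the recursion for the inflows $A^C_j$, as in Equation (\ref{eq_collateral_box_in_2}), where $s_{i\leftarrow j}\delta_j=s_{i\leftarrow j}$ absorbs the indicator exactly; moreover, since each column of $\mathcal{S}$ sums exactly to $\delta_j$, summing over $i$ gives
\begin{equation*}
\sum_i A^C_i=NA^0+(1-h)\theta\sum_j \delta_j A^C_j ,
\end{equation*}
so the only approximation needed is $\E[\delta_jA^C_j]\approx q\,\E[A^C_j]$, which is legitimate because $\delta_j$ depends on $j$'s out-links while $A^C_j$ depends on its in-links. This yields $S^{out}\approx Nq\theta A^0/\bigl[1-(1-h)\theta q\bigr]$ and $m\approx 1/\bigl[1-(1-h)\theta q\bigr]$, with $q$ to the first power; your expressions systematically understate the recycled collateral (already for $N=2$ the exact ensemble average exceeds the paper's formula, which in turn exceeds yours).

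The error is not innocuous even though your monotonicity and limit claims accidentally survive (since $q\mapsto q^2$ is also increasing with the same endpoints): it is responsible for both obstacles you list at the end. With the correct grouping, the core--periphery computation closes in the same form as the random-graph one: $S^{out}_{cp}$ equals $S^{0^{out}}_{cp}=\theta\bigl[(N-N_{core})+q_{core}N_{core}\bigr]A^0$ times $1/\bigl[1-(1-h)\theta q_{core}\bigr]$, so $m_{cp}\approx 1/\bigl[1-(1-h)\theta q_{core}\bigr]$ with $q_{core}=1-(1-p_{core})^{N_{core}-1}$. Hence the monotonicity of the core--periphery multiplier, which you call ``the delicate case'' and never actually verify, becomes immediate, and part 3 follows with an explicit threshold from the comparison $q_{core}>q$, namely $p_{th}=1-(1-p)^{(N-1)/(N_{core}-1)}$, as in the paper --- rather than your bare existence argument via the intermediate value theorem, which moreover needs a caveat at $p=1$, where the random graph attains the same supremum and the claim holds only vacuously. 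In short, the proposal rests on a wrong recursion, and the two steps it leaves unfinished are exactly the ones that recursion makes intractable; fixing the factorization repairs all three at once.
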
 
\begin{proof}
  See the appendix.
\end{proof}

\begin{figure}[H]
\centering
\captionsetup[subfloat]{farskip=0pt,captionskip=0pt}
\subfloat[] {\includegraphics [width=5cm,  height=4cm]{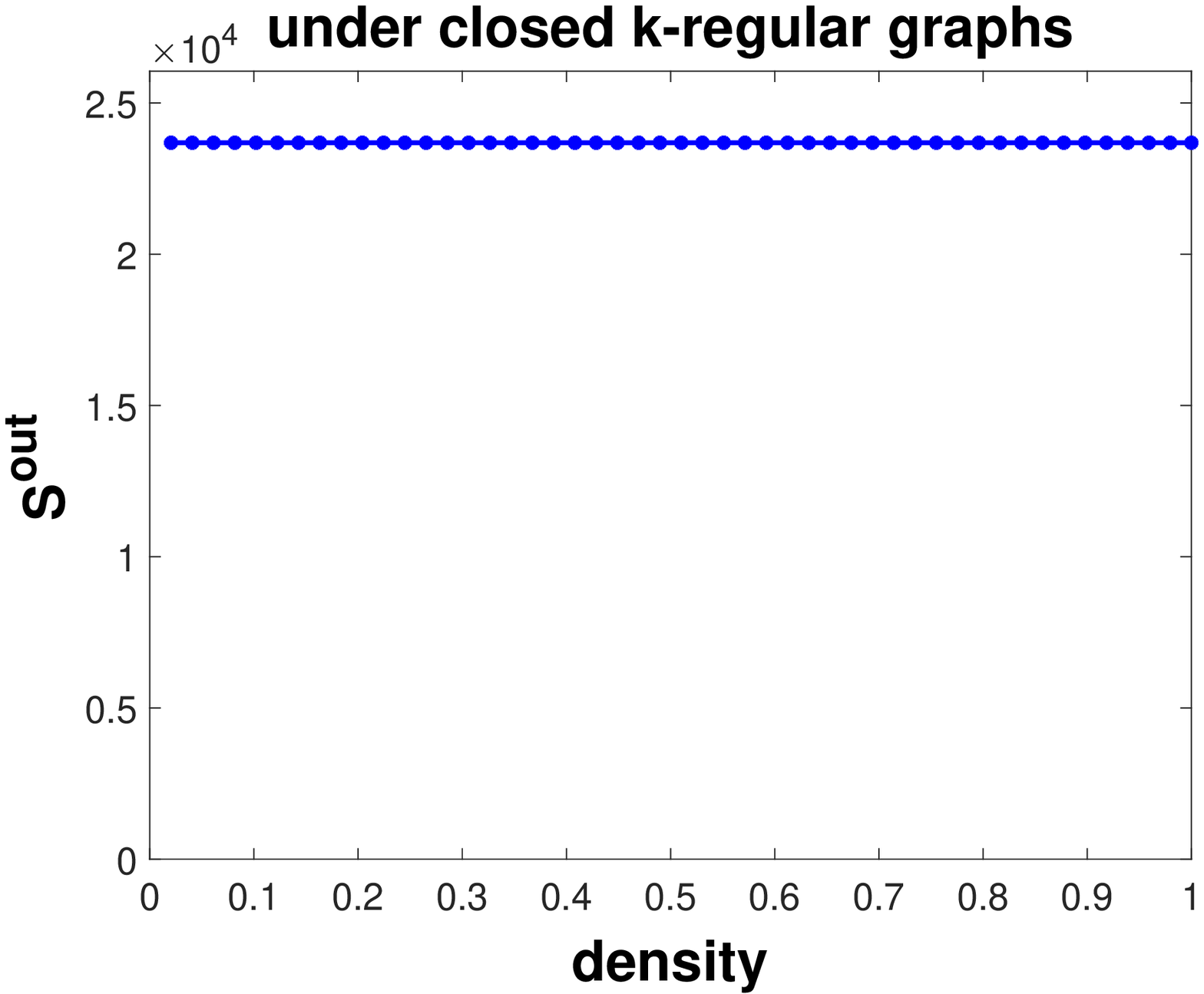}}
\subfloat[] {\includegraphics [width=5cm,  height=4cm]{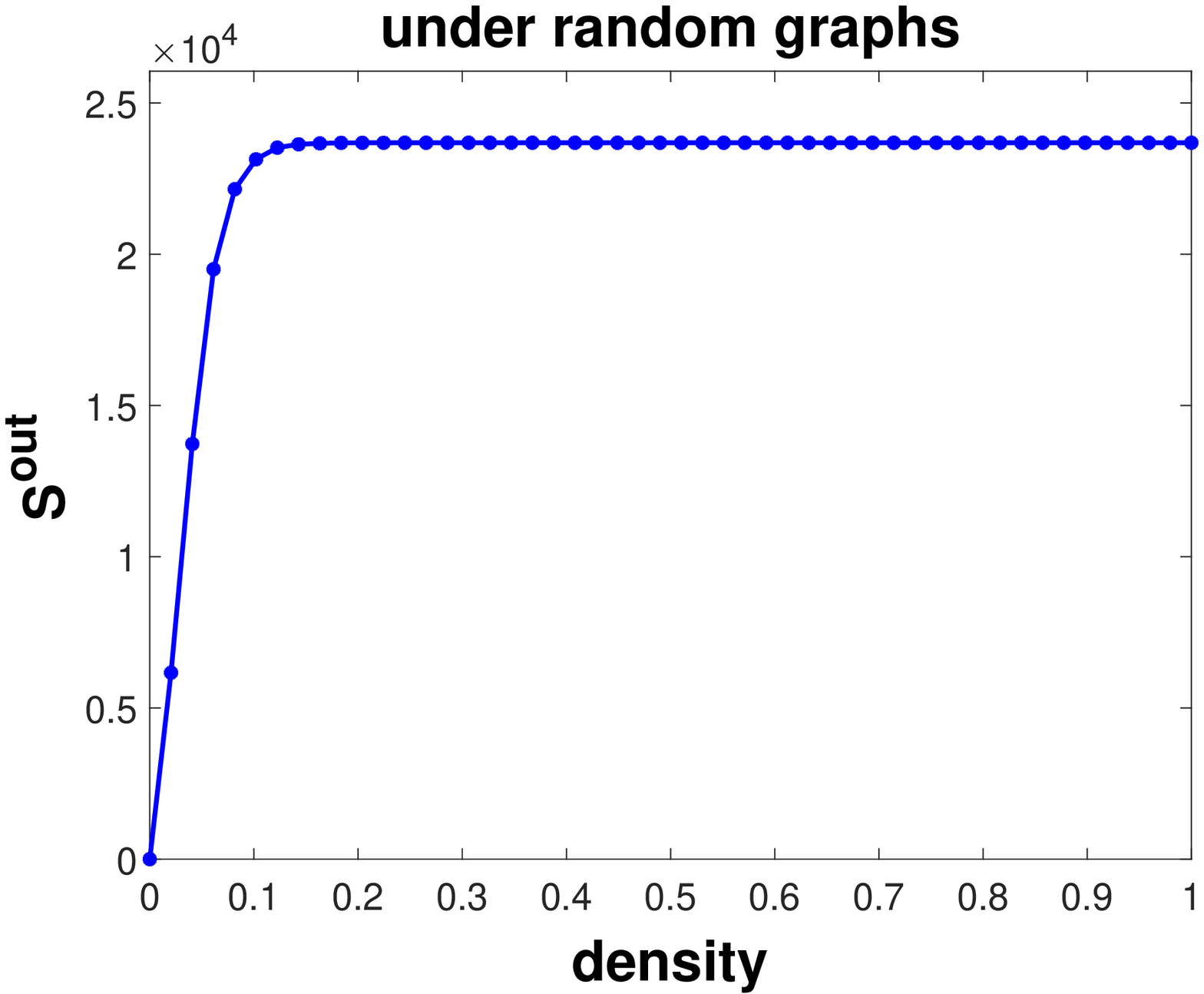}}
\subfloat[] {\includegraphics [width=5cm,  height=4cm]{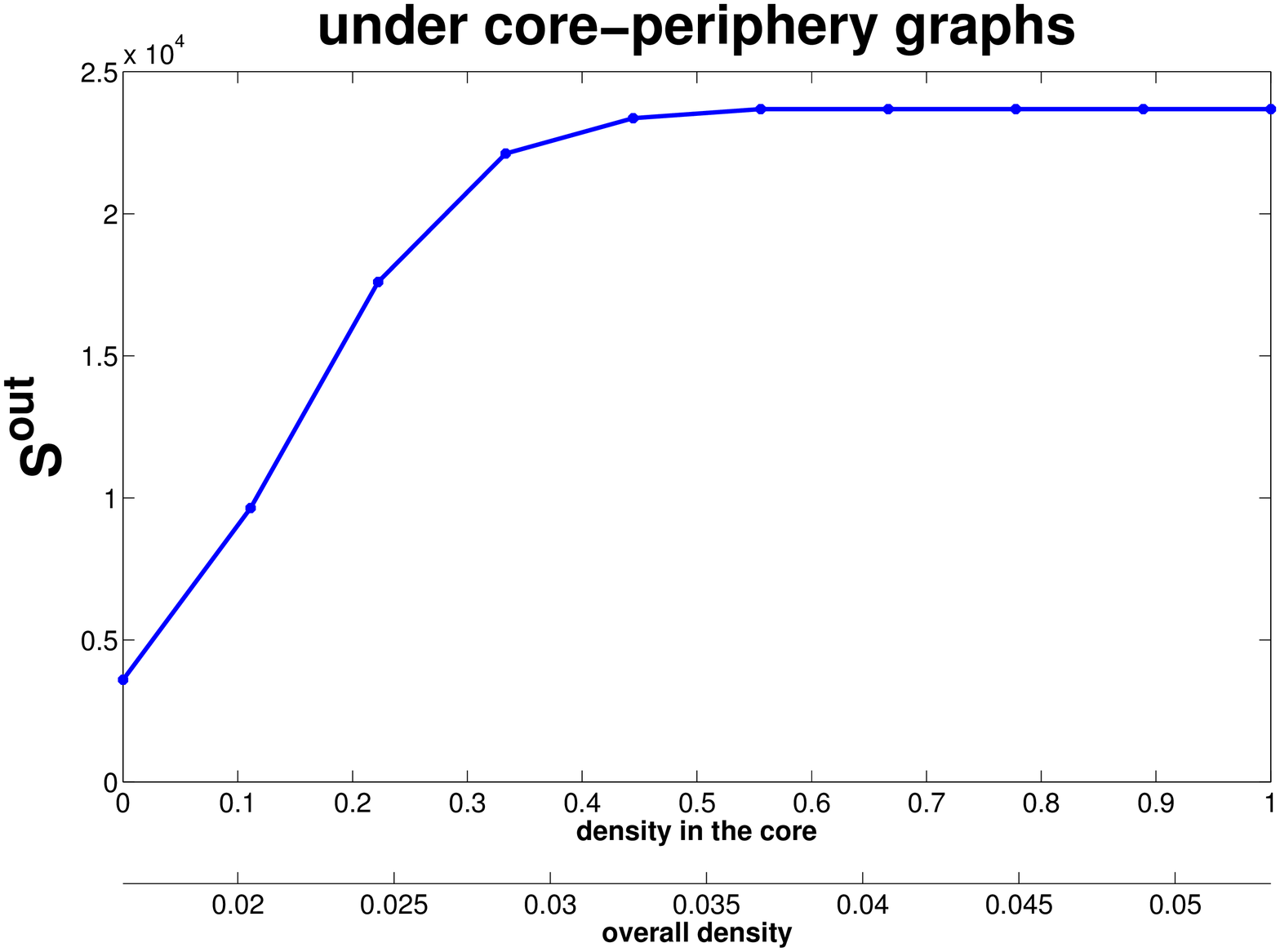}}
\caption{Total collateral ($S^{out}$) as a function of density under different network structures. Notice the different scale for the overal density in panel (c).}
\label{fig:collateral_outflowing_example}
\end{figure}

\begin{figure}[H]
\centering
\captionsetup[subfloat]{farskip=0pt,captionskip=0pt}
\subfloat[] {\includegraphics [width=5cm,  height=4cm]{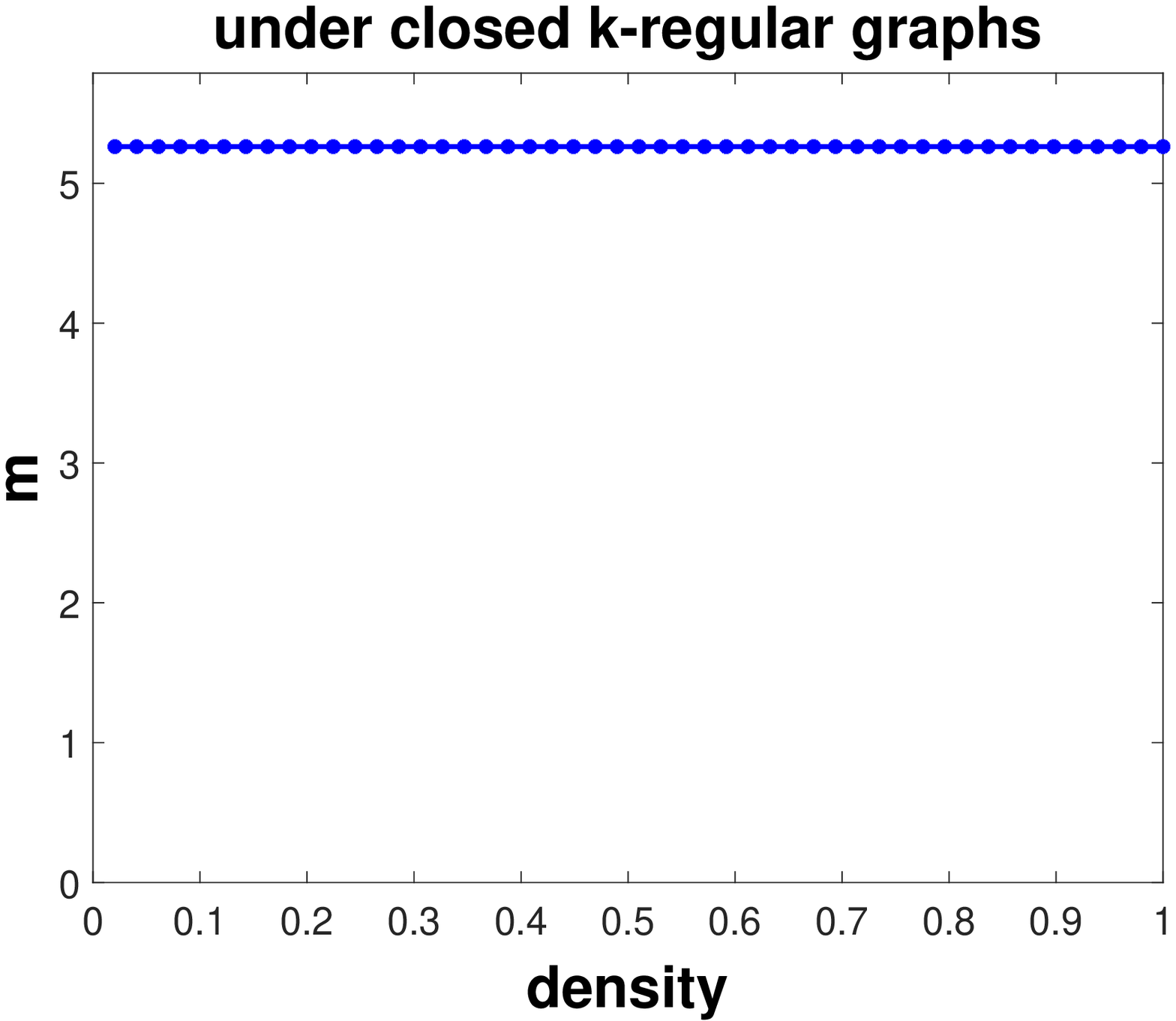}}
\subfloat[] {\includegraphics [width=5cm,  height=4cm]{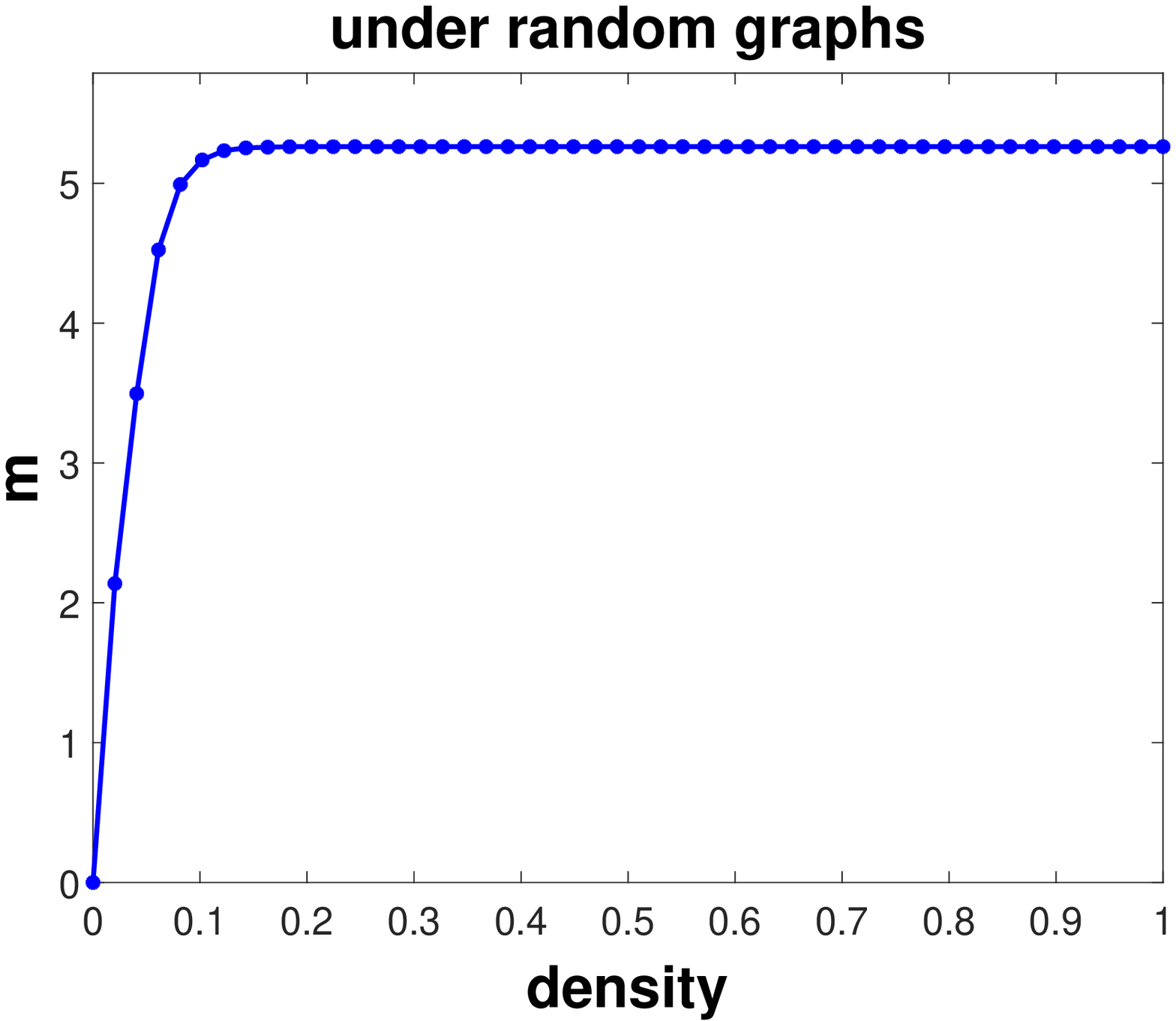}}
\subfloat[] {\includegraphics [width=5cm,  height=4cm]{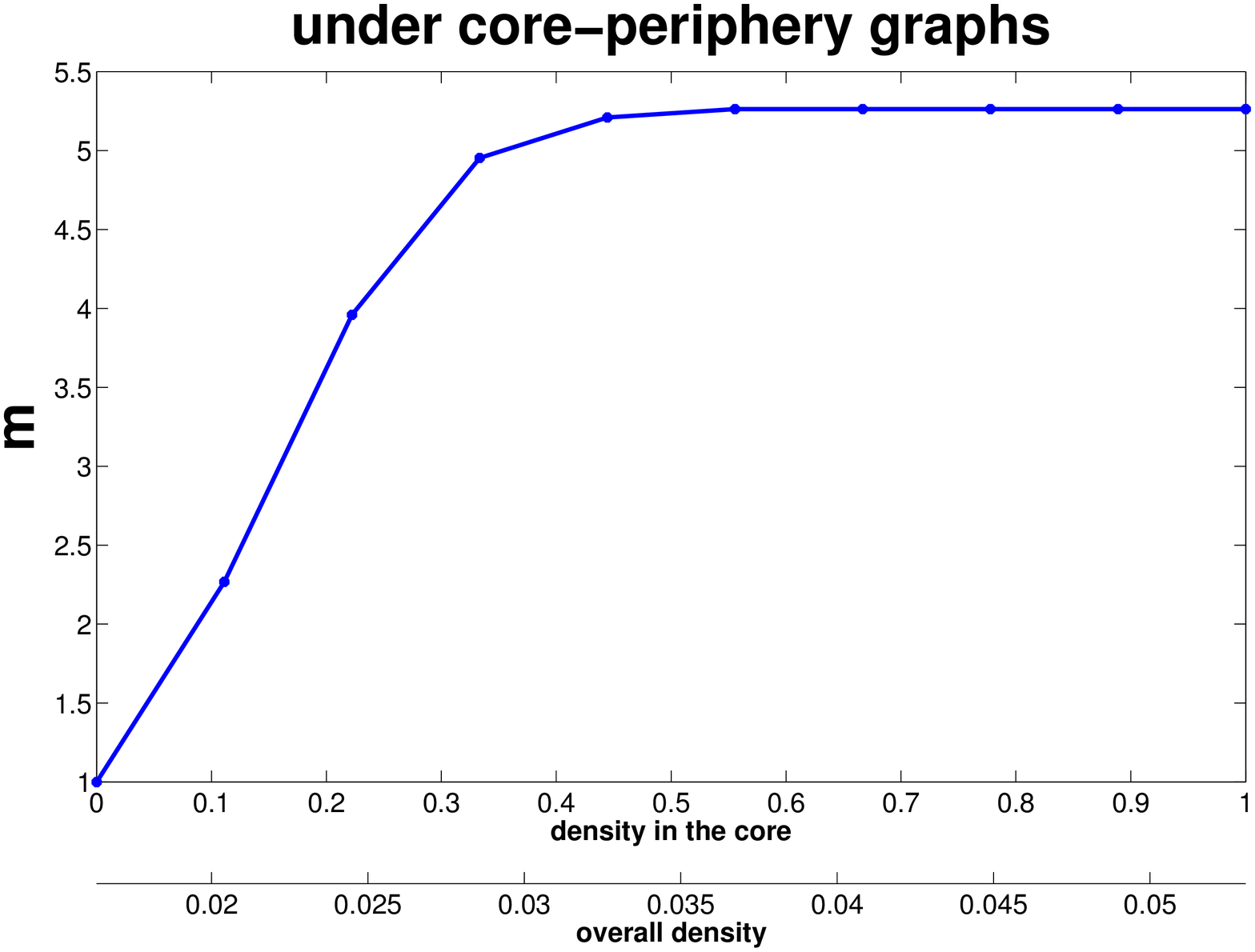}}
\caption{Collateral multiplier ($m$) as a function of density under different network structures. Notice the different scale for the overal density in panel (c).}
\label{fig:collateral_multiplier_example}
\end{figure}

The plots in Figures \ref{fig:collateral_outflowing_example} and
\ref{fig:collateral_multiplier_example} help to visualize the results
contained in the last two propositions. The plots show equilibrium
values of total collateral and of the multiplier resulting from
numerical simulations\footnote{The numerical simulation is implemented with $N=50$,  $h=0.1$,  $1-\theta=0.1$, and $A^0=100$ for all banks.} using each of the three network topologies
examined above (closed k-regular, random graph, and core-periphery)
and with different levels density (of $p_{core}$ for the
core-periphery). First, the plots show that both total collateral and
the multiplier do not change with the level of density in the closed
k-regular graph (plot (a) in both figures). In contrast, both
variables increase with the level of density in the random graph and
in the core-periphery network (respectively, panels (b) and (c) of the
two figures), before eventually converging to the same value of the
close k-regular graph (and determined by the expressions in
Proposition \ref{prop:proposition_positive_out_degree_basic}). The
main intuition for the latter result is that increasing the level of
density (in the overall network or in the core) increases both the
number and the length of cycles in the network\footnote{In addition,
  for the random graph, the number of nodes with positive out-degree
  also increases with density.}. These two factors have a positive
impact on endogenous collateral creation in the network, as we
explained in Section \ref{sec:the role of the direction of collateral
  flows}.  However, when the longest possible cycle in the network
(for the random graph) or in the core (for the core-periphery graph)
adding further links does not longer increase endogenous
collateral. Furthermore, both the third statement of Proposition
\ref{prop:proposition_density_rand} and the plots in the figures
indicate the core-perihery network generates a much higher total
collateral than the random graph already with small increases in
density. For instance the inspection of Figure
\ref{fig:collateral_multiplier_zoom} reveals that already with
$N=50$ banks in the network a tiny increase in
overall density (from $0.02$ to $0.03$) has the effect of more than doubling the value of
the multiplier (from $2$ to almost $5$). In contrast, a much larger change
in density is required to produce a similar effect in the random
graph. This result generalizes the insights discussed in the previous
section (cf. Proposition
\ref{prop:independence-of-cycles-length}). Once all banks have
positive out-degree and they are thus all contributing with outflowing
collateral, concentrating all collateral flows in a small cycle (like
the one in the core) already generates the largest possible total
collateral. This result has also implications for markets
organization, as it indicates that concentrating collateral flows
among few nodes has great advantages for the velocity of collateral
and thus for the overall liquidity of the market. At the same time, in
the next section we shall show that - when liquidity hoarding
externalities are present - networks with highly concentrated
collateral flows are also more exposed to larger collateral hoarding
cascades following small local shocks. 

\begin{figure}[H]
\centering
\includegraphics[scale=0.3]{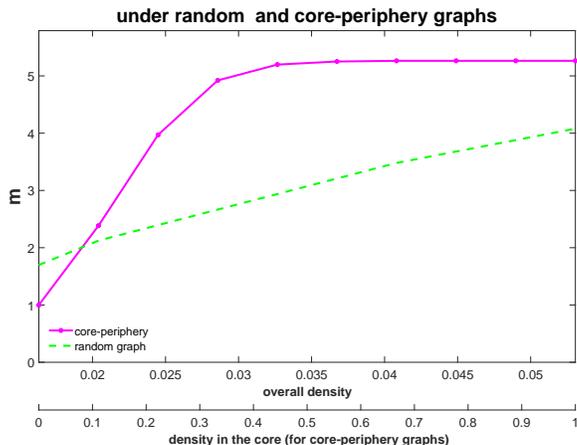}
\caption{Collateral multiplier ($m$) as a function of network the density in the
  random graph and in the core-periphery network. Notice the double scale for the overall network density and the density of links within the core of the core-periphery network.}
\label{fig:collateral_multiplier_zoom}
\end{figure}

\section{Value at Risk and Collateral Hoarding } 
\label{sec: Net liquidity position, Value-at-Risk, and hoarding effects}

So far we have worked with the assumption that non-hoarding rates $\{\theta_i\}_{i=1}^N$ were constant across time and homogeneous across banks. This has simplified the analysis and it has allowed us to highlight the role of the characteristics of network topology in determining collateral flows in the financial system. At the same time, this hypothesis is also quite restrictive as banks' hoarding and non-hoarding  might be responsive to the liquidity risk situation of banks and, accordingly, also by the level of available collateral \citep[see e.g.][]{Acharya2010precautionaryhoarding,Berrospide2012liquidityhoarding,deHaan2013liquidityhoarding}. In addition, 
recent accounts of collateral dynamics \citep{singh2012deleveraging} have documented the sizeable reduction in velocity of collateral in the aftermath of the last financial crisis as a result of increased collateral hoarding by banks. To account for these important phenomena in this section we extend the basic model presented in Section \ref{sec:model} to introduce time-varying non-hoarding rates determined by liquidity risk considerations. 

Again, to keep the model as simple as possible we abstract from many important aspects concerning the liquidity position of the banks. We assume that all funding is secured.   For every  bank $j$ let  $NL_j$ be its net liquidity position.  Recall that the amount of  pledgeable collateral of bank $j$ that can be used to get external funds with a haircut $h$ is $A_j^C$. At the same time, if $L_j  \neq  \varnothing $, a fraction $\theta_j$ of this amount of collateral  is already pledged (i.e. an amount of $A^{C^{out}}= \theta_j A_j^C$). The net liquidity position of the bank $j$  is thus given by:
\begin{equation}
NL_j=(1-h)(1-\theta_j)A_j^C(\theta_1,\theta_2,...,\theta_N, A^0, \mathcal {G})-\epsilon_j,
\label{eq_net_liquidity_position}
\end{equation}
where $\epsilon_j$ are payments due within the periods, i.e. liquidity shocks, which
are assumed to be a i.i.d. normally distributed random variable with mean $\mu_j$ and standard deviation $\sigma_j$. The notation $A_j^C(\theta_1,\theta_2,...,\theta_N,  A^0,  \mathcal {G})$ emphasizes the fact that  the total collateral position of a bank depends also on the fractions of non-hoarded collateral of all banks in the network $ \mathcal {G}$. Notice that the above equation implies that  the more borrowers of $j$ hoard collateral, the lower is the value of collateral $A^C_j$, and thus the higher the need to hoard collateral for $j$.

Let us start by observing that if the liquidity shock is large enough, bank $j$ defaults (i.e. $NL_j < 0$). This occurs when 
\[
\epsilon_j > (1-h)(1-\theta_j) A_j^C.
\]
Given the assumption on the random variable $\epsilon_j$, the default of $j$ is an event occurring with probability
\[ 
\text {prob.} (NL_j<0)=\text {prob.}(\epsilon_j > (1-h)(1-\theta_j)A_j^C).
\]

Furthermore, following  \citet{Adrian2010Liquidity_VaR} and \citet{Adrian2013Leverage_VaR}, we  assume that each bank $j$ employs a  Value-at-Risk (VaR) strategy to determine the fraction $1-\theta_j$ of collateral to hoard, so that the above probability of default is not higher than a target $(1-c_j)$ (where $0<c_j <1$). If we assume that returns on external assets held by $j$ are higher than the repo rate, then each  bank $j$ will decide the optimal fraction $1-\theta_j$ such that
\begin{equation}
\text {prob.}(\epsilon_j > (1-h)(1-\theta_j)A_j^C) = 1-c_j.
\label{eq_VaR_condition}
\end{equation}

Given that  $\epsilon_j$ are a i.i.d. normally distributed random variable  we have
\begin {equation}
\text {prob.} (NL_j<0)=\text {prob.}(\epsilon_j > (1-h)(1-\theta_j)A_j^C)=\frac {1}{2} [1- \text {erf}  (\frac {(1-h)(1-\theta_j)A_j^C-\mu_j}{\sigma_j \sqrt {2}})],
\end {equation}
where  $\text{erf}$ is  Gauss error function  defined as
\begin{equation}
\text {erf} (x)= \frac {1}{\pi}\int_{-x}^{x} \text {e}^{-t^2} dt.
\end{equation}

Under the VaR constraint bank $j$ sets the share of hoarded collateral at the level $\theta^*_j$ such that 
\begin{equation}
\frac {1}{2} [1- \text {erf}  (\frac {(1-h)(1-\theta_j)A_j^C-\mu_j}{\sigma_j \sqrt {2}})]= 1-c_j
\label{eq_VaR_condition_normal1}
\end{equation}
$ \Leftrightarrow $
\begin{equation}
 \text {erf}  (\frac {(1-h)(1-\theta_j)A_j^C-\mu_j}{\sigma_j \sqrt {2}})= 2c_j-1
\label{eq_VaR_condition_normal2}
\end{equation}
$ \Leftrightarrow $ 
\begin{equation}
\theta_j= 1- \frac {\sigma_j \sqrt {2}  \text {argerf} (2c_j-1) +\mu_j}{(1-h) A_j^C},
\label{eq_VaR_condition_normal3}
\end{equation}
where  $\text {argerf}$ is the inverse error function  defined  in $(-1, 1) \to \mathbb{R}$ such that  
\begin{equation}
\text {erf}(\text {argerf}(x))	=x.
\end{equation}

Equation (\ref {eq_VaR_condition_normal3}) indicates that $\theta_j$ is a decreasing function of the VaR target $c_j$, of the uncertainty about the liquidity shock (captured by $\sigma_j$), of the mean of  the liquidity shock $\mu_j$, and of the haircut rate $h$. Moreover, it is an increasing function of value of the collateral $A^C$ as well as of the shares of non-hoarded collateral of other banks in the network. 
Denote 
\begin{equation}
c_j^0=\sigma_j \sqrt {2}  \text {argerf} (2c_j-1) +\mu_j,
\label{eq_c_0_Norm}
\end{equation}
 we then obtain  the following final expression for the optimal $\theta_j$ under  the  assumption of normally-distributed liquidity shocks.
\begin{equation}
\theta_j= 1- \frac {c_j^0}{(1-h) A_j^C(\theta_1,\theta_2,...,\theta_N, A^0, \mathcal {G})}.
\label{eq_VaR_theta_Norm}
\end{equation}

Notice that the endogenous level of non-hoarding, $\theta_j$, depends now not only on the uncertainty about $\epsilon_j$ but also on the haircut rate $h$, as well as on the value of the collateral $A^C_j$. The interdependence between $\theta_j$ and $A^C_j$ implies that each bank will adjust its hoarding preference (i.e. hold more or less collateral) in anticipation of expected ``losses'' or ``gains'' in  its total amount of collateral. It also follows that a change in hoarding rates at bank $j$ will induce a change in hoarding rates at banks to which $j$ is connected to. 

We now investigate the existence of equilibria in non-hoarding rates. Let us start by noticing that Equation  (\ref{eq_VaR_theta_Norm}) indicates that, for every bank j, if $L_j  \neq  \varnothing $, its hoarding can in general be expressed as:  
\begin {equation}
 (1- \theta_j)=  \frac {c_j^0}{(1-h) A_j^C(\theta_1,\theta_2,...,\theta_N, A^0, \mathcal {G})},
\label{eq_theta_VaR}
\end {equation}
where the variable $c_j^0$ defined by equation (\ref {eq_c_0_Norm})  captures the effects of  uncertainty in $\epsilon_j$ on  $(1- \theta_j)$. Since $\theta_j \in [0, 1]$, it follows that $A_j^C(\theta_1,\theta_2,...,\theta_N, A^0, \mathcal {G})$ must be  in $[\frac {c_j^0}{1-h}, \infty]$.
From equation (\ref{eq_theta_VaR}), it follows that
\begin{equation}
(1-h) \theta_j=\frac {(1-h)A^C_j- c^0_j}{ A^C_j}.
\label{eq_theta_VaR2}
\end{equation}
Equivalently,
\begin{equation}
(1-h) \frac {\theta_j}{k_j^{out}} =\frac {(1-h)A^C_j- c^0_j}{A^C_j k_j^{out}}.
\label{eq_VaR_share_dynamic}
\end{equation}

Recall that under the assumption that banks homogenously spread collateral across their lenders, we have
\[
w_{i \leftarrow j}=\theta_j s_{i\leftarrow j}= \frac{\theta_j}{k_j^{out}}, \ \forall i \in L_j \neq  \varnothing.
\]   
Therefore,
\[
(1-h) w_{i \leftarrow j}= \frac {(1-h)A^C_j - c^0_j}{A^C_j k_j^{out}}, \ \forall i \in L_j \neq  \varnothing.
\]   
Next, denote by $\mathcal {\tilde {W}}^{VaR}= \{ \tilde {w}_{i\leftarrow j} \}$ the matrix with size ($N\text{x}N$) where
\begin {equation}
\begin {cases}
\tilde {w}_{i\leftarrow j}=\frac {(1-h)A^C_j - c^0_j}{A^C_j k_j^{out}}, \ \forall i \in L_j \neq  \varnothing\\
\tilde {w}_{i\leftarrow j}=0,  \  \mbox {elsewhere}\\
\end {cases}
\label{w_VaR_model}
\end {equation}
The level of collateral $A_i^C$ is then obtained by solving the following equation\begin{equation}
A_i^C=A^0_i+\sum_{j \in B_i} \tilde {w}_{i \leftarrow j} A^C_j \ \forall i=1,2,...N.
\label{eq_collateral_box_VaR_mt}
\end{equation}
Finally, the non-hoarded rates $\theta_i$ can be obtained by substituting  $A_i^C$ into equation  (\ref{eq_theta_VaR2}).

In general, the solution to the system composed by the system of
equations in (\ref{eq_collateral_box_VaR_mt}) might not be unique. However, the following proposition establishes sufficient conditions for the uniqueness of the solution.
\begin{proposition}
Let $\tilde {b}$ be a column vector size $N \text{x}1$, given by   
\begin {equation}
\tilde {b}= A^0- \mathcal {S}\mathcal {C}^0.
\label {parameter_b_VaR}
\end {equation}
Define the matrix $\tilde {\mathcal{A}}=\{ \tilde {a}_{i\leftarrow j}\}$ with size $N$\text {x}$N$ as 
\begin {equation}
\tilde {\mathcal{A}} =\mathcal {I}-(1-h)\mathcal {S}.
\label{parameter_A_VaR}
\end {equation}
If $0 < h < 1$ and $\tilde {\mathcal{A}}^{-1} \tilde {b} \geq \frac {\mathcal {C}^0}{1-h}$ where  $\mathcal {C}^0=[c^0_1, c^0_2,..., c^0_{N-1}, c^0_N]^T$ is the column vector capturing the effects of the net liquidity shock on hoarding preferences, then the system  (\ref{eq_collateral_box_VaR_mt}) has the unique solution\footnote {Throughout this paper, for any  two  vectors X and Y of size n, then  $X \geq Y$ if $X_i \geq Y_i, \forall i=1,...n.$}:
\begin {equation}
A^{C}=\tilde {\mathcal{A}}^{-1}\tilde {b}.
\label {eq_total_collateral_solution}
\end {equation}
\label{proposition_solution_A_C_VaR}
\end{proposition}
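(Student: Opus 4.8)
The plan is to show that system \eqref{eq_collateral_box_VaR_mt}, although it looks nonlinear because each weight $\tilde{w}_{i\leftarrow j}$ carries $A_j^C$ in its denominator, collapses to a \emph{linear} system once the weights multiply back the collateral levels. First I would note that every $j\in B_i$ has $k_j^{out}>0$ (it lends to $i$), so $s_{i\leftarrow j}=1/k_j^{out}$ and hence
\[
\tilde{w}_{i\leftarrow j}A_j^C=\frac{(1-h)A_j^C-c_j^0}{k_j^{out}}=s_{i\leftarrow j}\bigl[(1-h)A_j^C-c_j^0\bigr].
\]
Summing over $j\in B_i$ and using $s_{i\leftarrow j}=0$ for $j\notin B_i$, equation \eqref{eq_collateral_box_VaR_mt} becomes $A_i^C=A_i^0+(1-h)\sum_j s_{i\leftarrow j}A_j^C-\sum_j s_{i\leftarrow j}c_j^0$, i.e. in matrix form $\bigl(\mathcal{I}-(1-h)\mathcal{S}\bigr)A^C=A^0-\mathcal{S}\mathcal{C}^0$, which is exactly $\tilde{\mathcal{A}}A^C=\tilde{b}$ with $\tilde{\mathcal{A}}$ and $\tilde{b}$ as in \eqref{parameter_A_VaR} and \eqref{parameter_b_VaR}.

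The second step is to establish invertibility of $\tilde{\mathcal{A}}$. I would observe that $\mathcal{S}$ is column-substochastic: its entries are non-negative and each column sums to $1$ (when $k_j^{out}>0$) or $0$ (otherwise), so the maximum-absolute-column-sum norm obeys $\|\mathcal{S}\|_1\le 1$ and thus $\rho(\mathcal{S})\le 1$. Since $0<h<1$, the scaled matrix satisfies $\rho\bigl((1-h)\mathcal{S}\bigr)=(1-h)\rho(\mathcal{S})\le 1-h<1$, so the Neumann series $\sum_{k\ge 0}[(1-h)\mathcal{S}]^k$ converges and equals $\tilde{\mathcal{A}}^{-1}$. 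The linear system then admits the unique solution $A^C=\tilde{\mathcal{A}}^{-1}\tilde{b}$, which is \eqref{eq_total_collateral_solution}.

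The remaining and most delicate step is to reconcile this linear solution with the admissibility constraint $\theta_j\in[0,1]$ that underpins the VaR rule, since this is what legitimizes the reduction and excludes the spurious multiplicity mentioned just before the statement. Here I would invoke the hypothesis $\tilde{\mathcal{A}}^{-1}\tilde{b}\ge\mathcal{C}^0/(1-h)$: componentwise it reads $A_j^C\ge c_j^0/(1-h)$, which by \eqref{eq_theta_VaR} is equivalent to $1-\theta_j=c_j^0/[(1-h)A_j^C]\le 1$, i.e. $\theta_j\ge 0$. This in turn makes every off-diagonal weight $\tilde{w}_{i\leftarrow j}$ non-negative, so $A_j^C\ge A_j^0>0$ and each division defining $\tilde{w}_{i\leftarrow j}$ is well posed, while the implied hoarding rates stay feasible; consequently the unclamped VaR formula \eqref{eq_VaR_theta_Norm}, and hence the linear reduction above, holds throughout. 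The main obstacle is precisely this point: outside the region carved out by the inequality some hoarding rate would be clamped at a boundary of $[0,1]$, the governing equations would change form, and additional solutions could appear — the stated condition is exactly what confines the problem to the single interior branch on which the solution $A^C=\tilde{\mathcal{A}}^{-1}\tilde{b}$ is unique.
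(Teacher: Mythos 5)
Your proof is correct and takes essentially the same route as the paper: cancel $A_j^C$ in the weights to collapse the system to the linear form $\tilde{\mathcal{A}}A^C=\tilde{b}$, then deduce invertibility of $\tilde{\mathcal{A}}=\mathcal{I}-(1-h)\mathcal{S}$ from the fact that the columns of $\mathcal{S}$ sum to $0$ or $1$, which caps its spectral radius at $1$. The only cosmetic differences are that you certify invertibility via the column-sum norm and the Neumann series where the paper argues by contradiction with Perron--Frobenius, and that you spell out more explicitly than the paper does how the hypothesis $\tilde{\mathcal{A}}^{-1}\tilde{b}\geq\mathcal{C}^0/(1-h)$ keeps the implied hoarding rates in $[0,1]$ and thus legitimizes the linear reduction.
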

\begin{proof} 
See the appendix.
\end{proof}

Finally, by substituting $A^{C}$  in (\ref{eq_total_collateral_solution}) into equation  (\ref{eq_theta_VaR2}), we obtain the solution to the equilibrium rates of non-hoarded collateral as follows
\begin {equation}
\theta_j=  1-  \frac {c_j^0}{(1-h) \tilde {\mathcal{A}}^{-1}\tilde {b}}.
\label{eq_theta_VaR_solution}
\end {equation}
In the next section, we use the results obtained from the above proposition about the determination of equilibrium banks' collateral $A_j^{C}$ and non-hoarding rates $\theta_j$ to study the emergence of collateral hoarding cascades under different network structures when some banks are hit by uncertainty shocks, captured by an increase in the variable $c^0_j$.

\section{Collateral hoarding cascades}
\label{sec:coll-hoard-casc}

We now use the VaR collateral hoarding model developed in the previous
section to study how different structures of rehypothecation networks
react when a fraction of banks in the network is hit by adverse
shocks.  In particular we focus on uncertainty shocks\footnote{We also
  performed analysis of the impact of aggregate shocks to the value of
  collateral. However, results and the dynamics were similar to the
  one reported in the paper.} that cause the variable $c^0_i$ in
Equation \eqref{eq_VaR_theta_Norm} to increase to $c^1_i=
c^0_i(1+\tilde {c}^0)$ for some banks $i$, where $\tilde {c}^0 \geq
0$. The rise in uncertainty will lead those banks to increase their
hoarding of collateral. In turn, this will trigger a cascade of
hoarding effects at banks that are either directly or indirectly
connected to the banks initially it by the uncertainty shock. Indeed,
higher hoarding at bank $i$ will also cause a loss in the collateral
flowing into bank's $i$ neighbors. As a consequence, the latter banks
will also increase their hoarding rates, causing further loss in
collateral inflows at other banks in the system and thus further
adjustments in hoarding rates at other banks in the system. The final
result of the foregoing cascade will be a new equilibrium
characterized, in general, by a lower level of total collateral in the
system. On the grounds of the results stated in Proposition
\ref{proposition_solution_A_C_VaR} we can determine the new
equilibrium vector of collateral in the aftermath of the local
uncertainty shock. More formally, let $\mathcal {C}^1=[c^1_1, c^1_2,..., c^1_{N-1}, c^1_N]^T$ the new vector of uncertainty factors in the aftermath of the local uncertainty shock. By substituting $\mathcal {C}^1$ into  equations (\ref{eq_total_collateral_solution}) and  (\ref{parameter_b_VaR}),  we obtain the new equilibrium solution, $A^{C^{1}}$, to  $A^{C}$ at the end of the hoarding  cascade as
\begin {equation}
A^{C^{1}}=\tilde {\mathcal{A}}^{-1} [A^0 - \mathcal {S}\mathcal {C}^1].
\label {eq_total_collateral_solution_shock_uncertainty}
\end {equation}

\begin{figure}[H]
\centering
\captionsetup[subfloat]{farskip=0pt,captionskip=0pt}
\subfloat[random attack] {\includegraphics [width=8cm,  height=6cm]{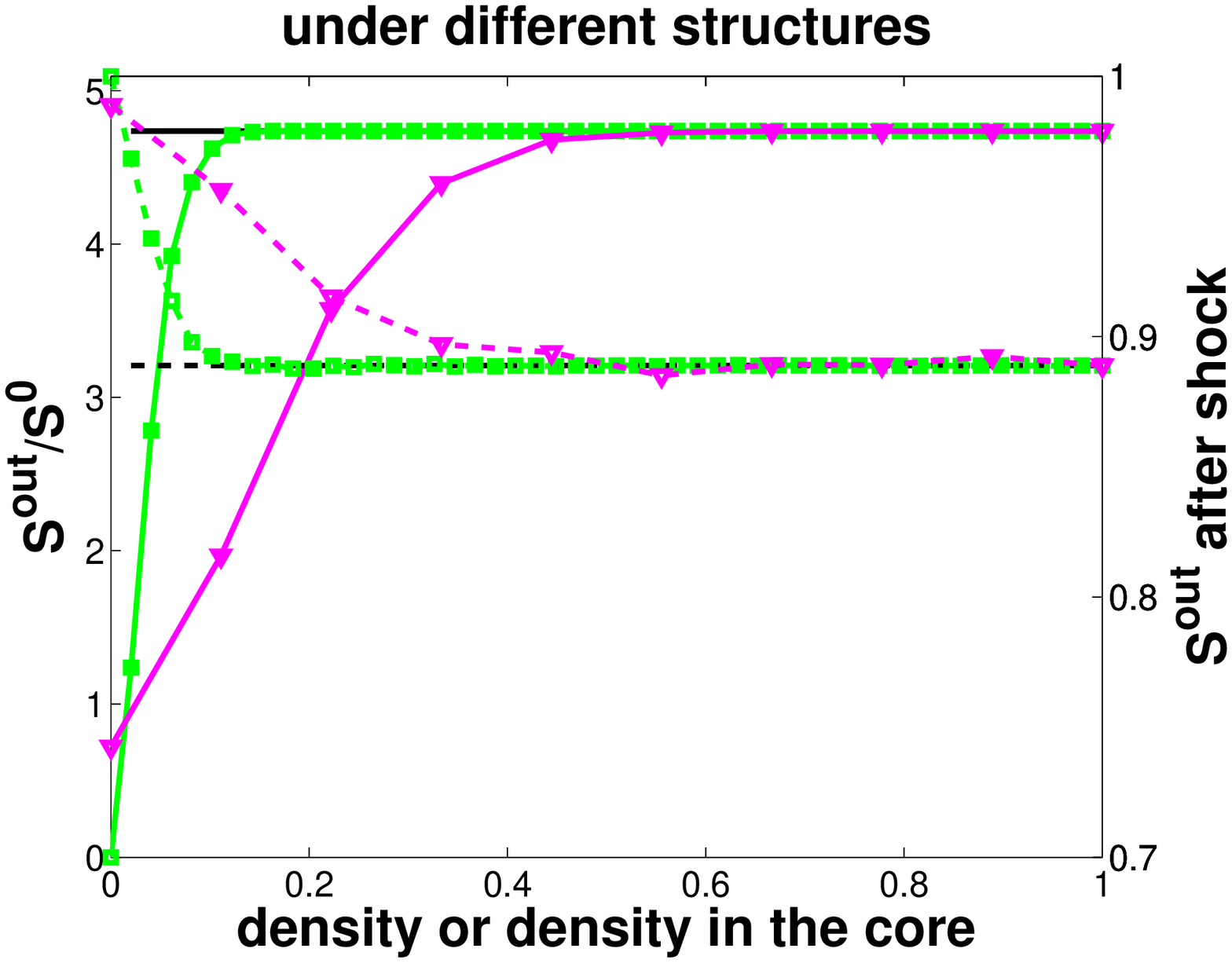}}
\subfloat [targeted attack] {\includegraphics [width=8cm,  height=6cm ]{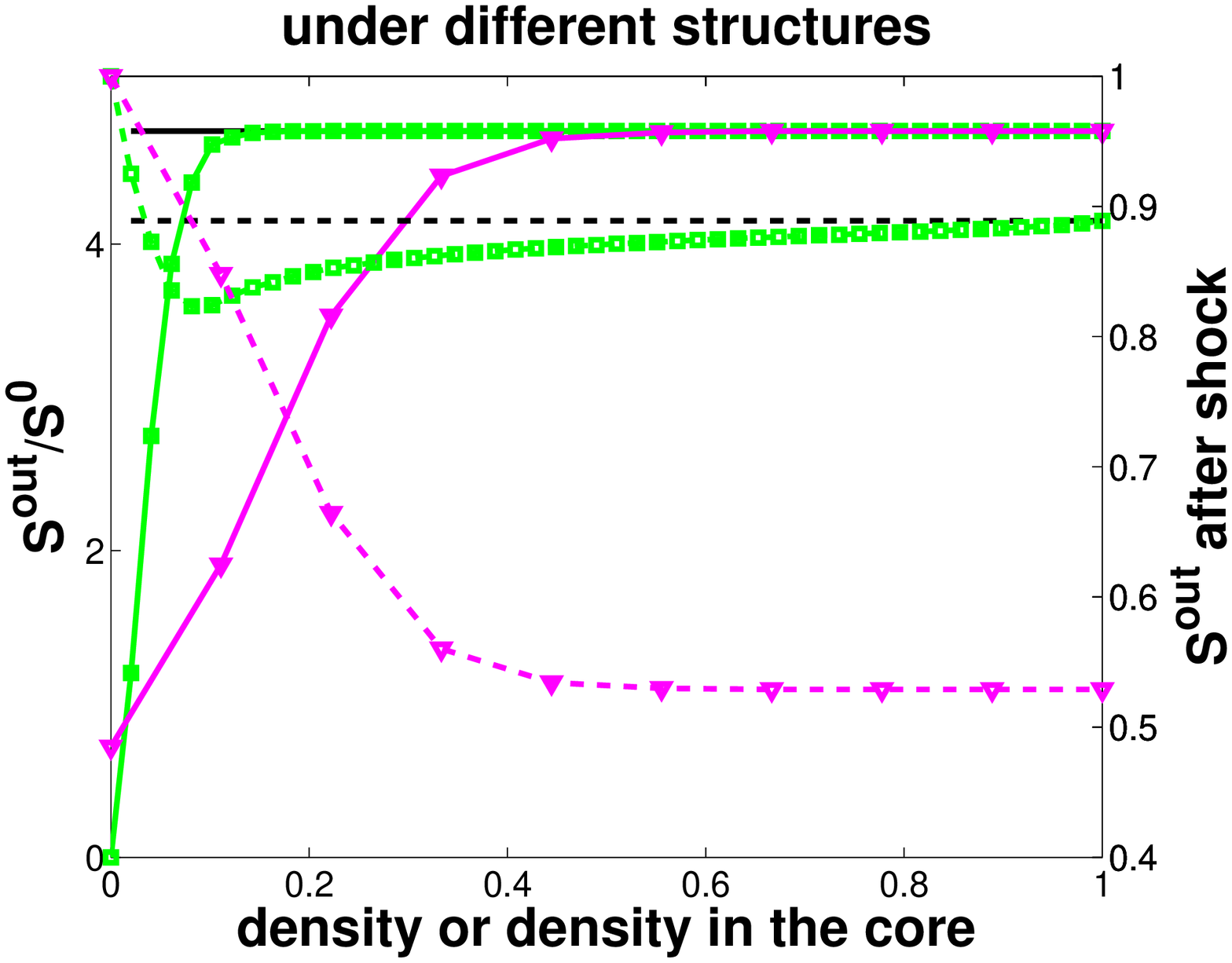}}
\caption{Collateral multiplier and collateral losses following local uncertainty shocks, in different network structures and for different overall density levels.  In each panel,  the left y-axis  shows the ratio between total outflowing collateral and total proprietary collateral $S^{out}/S^{0}$ (solid line). The right y-axis shows total collateral $S^{out}$  after shock (relative to the pre-shock level, dashed line). Different network structures are represented by different colors and markers: closed k-regular graph (black, no marker), random graphs (green, squares), core-periphery graphs (magenta, diamonds).}
\label{compare_structures_extended_model_hoarding}
\end{figure}

Let us now investigate how after-shock equilibrium collateral  of
behaves in different networks.  The plots in Figure
\ref{compare_structures_extended_model_hoarding} show the pre-shock
levels of the ratio between equilibrium total outflowing collateral
$S^{out}$ and total proprietary collateral $S^{0}$ (left
scale). Moreover, they show 
equilibrium total ouflowing collateral after the shock relative to its
pre-shock value (right-scale). The two variables are plotted as functions
of network density and for the three different structures examined in
Section \ref{sec:netw-arch-coll}, namely the closed k-regular
network $\mathcal{G}_{reg}$, the random network $\mathcal{G}_{rg}$,
and the core-periphery network $\mathcal{G}_{cp}$. The plots refers to
numerical investigations of equilibrium collateral before and after a fraction
$f=20\%$ of banks in the network experiences a $50\%$ increase in the
uncertainty factor (so that $\tilde {c}^0=0.5$) and are performed
under two different shocks scenarios. In the first of them (``random
attack'', left-side plots) the banks hit by the uncertainty shock are
randomly selected. In the second scenario (``targeted attack'',
right-side plots) the shocked banks are selected according to their
centrality\footnote{We use ``PageRank'' centrality \citep[see][for
  more details]{newman2010networks,battiston2012debtrank}. However,
  the main conclusions still hold when the degree centrality is
  employed.} in the network (in descending order).\footnote{The other
parameters of the simulation where set so that the number of banks is
$N=50$, and $h=0.1$, $A_i^0=100$, $1-\theta_i=0.1$  and $c^0_i=1$ for all
banks.} Notice that the second scenario is relevant only for the random network and for the core-periphery, as all nodes have the same centrality in the closed k-regular network. 

The analysis of Figure
\ref{compare_structures_extended_model_hoarding} reveals first that -
before the shock - the behaviour of total collateral under the three network structures is the same as the one discussed in Section \ref{sec:netw-arch-coll} for the case of constant hoarding rates\footnote{Indeed, all the results of Propositions \ref{prop:new_prop_k_reg} and \ref{prop:proposition_density_rand} hold also in the model where hoarding rates are determined according to a VaR criterion. For the sake of brevity, we do not report here these propositions and the related proofs. However, they are available from the authors upon request.}. Second, the effects of the local uncertainty shock vary with density only in the random network and in the core-periphery network. In contrast, the loss in total collateral does not change with density in the closed k-regular graph. Third, the overall impact of the shock is very different across the scenarios considered. In particular, the maximal impact is quite small (only a $10\%$ loss) for all the three structures in the random attack scenario. In addition, all the structure generate the same total loss as overall density converges to 1. In contrast, in the targeted attack scenario, the total loss generated in presence of a core-periphery network is much larger than in the other network structures (up to $50\%$ of the initial total collateral value) and increases with density. 

Thus, the core-periphery network generates very large losses in
collateral when central nodes are hit by local uncertainty
shocks. Recall that those nodes are precisely the one concentrating
collateral flows, and thus generating the large increases in
endogenous collateral stressed in Section \ref{sec:netw-arch-coll}. It
follows that the core-periphery network displays a trade-off between
liquidity and systemic (liquidity) risk. On the one hand,
concentrating collateral flows generates large gains in market
liquidity already with a low density of the network. On the other
hand, high concentration produces large liquidity losses when small
local shock come across. Indeed, by concentrating collateral flows,
more central nodes have also a larger impact on collateral at the
peripheral nodes that have connections  with them, and they thus
trigger a must stronger adjustment in hoarding rates thereafter. Notice that these concentration effects underlying the large liquidity losses in the core-periphery network are much smaller in the random graph (where link heterogeneity is small). In addition, they are completely absent in the closed k-regular graph, where all nodes have the same centrality. 

To conclude, it is useful to stress that the presence of liquidity hoarding externalities is central for the above results about the trade-off observed in the core-periphery networks. In other terms, targeted local shocks would have a small impact in core-periphery networks if hoarding rates were not responsive to changes in the liquidity positions of banks. To understand why, notice that with constant hoarding rates a loss in collateral value or an exogenous increase in hoarding rates at one bank $i$ will only have a $n^{th}$-order effect at other banks directly or indirectly connected to it. And this is because the initial shock is dampened by haircut rates and hoarding rates at other banks along the chain. For instance, in the very simple case of the cycle (i.e. closed chain) displayed in Figure \ref{fig_three_nodes} (c) a shock to outflowing collateral at node 1  ($A^{0^{out}}_1$) will have an effect only of order $\theta_3  (1-h) \theta_2  (1-h)$ on outflowing collateral at node 3 (see also Equation \eqref{eq:closed_chain_system}). In contrast, in the case of VaR-determined hoarding rates, the effect can be large because it is reinforced by the process of agents' revising their hoarding rates at each step along the rehypothecation chain.\footnote{In particular, in the example of Figure \ref{fig_three_nodes} (c) with VaR-based hoarding rates the effect of a shock to $A^{0^{out}}_1$ on the outflowing collateral of node 3 be of the order $\left(1-h\right)^2\left(\theta_3  \theta_2+\theta_3\frac{\partial \theta_2}{\partial A^{0^{out}}_1}+\theta_2\frac{\partial \theta_3}{\partial A^{0^{out}}_1}\right).$}

\section{Concluding remarks}
\label{Conclusions}

We have introduced and analyzed a simple model to study collateral flows over a network of repo contracts among banks. We have assumed that, to obtain secured funding, banks may pledge their proprietary collateral or re-pledge the collateral obtained by other banks via reverse repos. The latter practice is known as ``rehypothecation'' and it has clear advantages for market liquidity as it allows banks to secure more transactions with the same set of collateral. At the same time, re-pledging other banks' collateral may also raise liquidity risk concerns, as several banks rely on the same collateral for their repo transactions. We have focused on investigating which characteristics of rehypothecation networks are key in order to increase the velocity of collateral flows, and thus increase the liquidity in the market. We have first assumed that banks hoard a constant fraction of their collateral. Under this hypothesis we have shown that characteristics of the network like the length of re-pledging chains, the presence of cyclic chains or the direction of collateral flows are key determinant of the level of endogenous collateral in the system, defined as a overall level of collateral larger than the initial proprietary endowments of banks. In particular, we have shown that the level of endogenous collateral increases with chains' length. However, cyclic chains allow, \textit{ceteris paribus}, for a larger endogenous collateral than a-cyclic chains of the same length. Finally, we have shown that endogenous collateral is large already with small cyclic chains if the banks involved centralize collateral flows. The foregoing features of network topology underlie the results about the determination of total collateral in more general network architectures. In particular, we showed that total collateral increases with density in the random network (displaying a mild heterogeneity in collateral flows across banks) and in the core-periphery networks (displaying high concentration of collateral flows). Nevertheless, core-periphery networks generate larger collateral than random networks already with smaller increase in the density of the network. The foregoing results have implications for the micro-structure of markets where collateral is important, as they highlight a new factor besides network density - i.e. concentration in collateral flows - that allows for significant gains in market liquidity. A market with highly concentrated collateral flows generates higher velocity of collateral, as it is thus more liquid, even if banks are not tied by a dense network of financial contracts. 

Furthermore, we have extended the model to allow for endogenous levels of hoarding rates that depend on the liquidity position of each bank in the network. More precisely, we assumed that banks set their hoarding rates by adopting a Value-at-Risk criterion aimed at minimizing the risk of liquidity defaults. We have shown that, in these framework, hoarding rates of each single bank are in general dependent on other banks' rates and collateral levels, a feature which introduces important collateral hoarding externalities in the analysis. We have then used the above framework to study the overall impact on collateral flows of local adverse shocks leading to an increase in payments' uncertainty at some banks in the market, in particular investigating how the response may vary with the topology of the rehypothecation network. We have highlighted that core-periphery networks generate larger losses in overall collateral compared to other network structures (closed k-regular network, random network) when the nodes experiencing the shock are the most central nodes in the network, i.e. the ones concentrating collateral flows. This result has interesting implications for the regulatory analysis of markets with collateral, as it shows that the same network structures allowing for the largest increase in collateral velocity - i.e. core-periphery networks - are also the ones more exposed to largest collateral hoarding cascades in case of local shocks. A trade-off between liquidity and systemic (liquidity) risk thus emerges in those networks. In addition, our results also suggest that - in the presence of rehypothecation - regulatory liquidity and collateral requirements imposed to banks need to account for the structure of the network of lending contracts across banks. In particular, such requirements should both account for the systemic role (e.g. centrality) of the banks in the collateral flow network, as well as account for the whole topology of the network (e.g. for the presence of hierarchical structures such as the core-periphery architecture).

Our work could be extended at least in three ways. First, in these work we have abstracted from many important aspects of real-world secured lending markets, such as heterogeneous collateral quality and endogenous haircut rates. Introducing these elements could probably enrich our results. In particular, in the model the haircut rate affects the extent of rehypothecation. Accordingly, endogenous haircut rates that reflect different levels of counterparty risk can be an additional source of externalities in the model. Second, we have used the Value at Risk criterion to determine hoarding rates in our model. However, hoarding rates may also result from liquidity requirements imposed to banks. It would then be interesting to extend the model to study how different requirements that have been proposed so far may impact on market liquidity in presence of rehypothecation, and how these requirements should be designed in order to minimize the liquidity-systemic risk trade-off that we highlighted above. Finally, we have focused on bilateral repo contracts. However, it would be interesting to study how our results might change in presence of try-party repo structures, and in the presence of central clearing counterparties that interact with banks re-using their collateral. 


\singlespacing
 
\section*{Acknowledgments}

\footnotesize{
We are indebted to Joseph E. Stiglitz, Stephen G. Cecchetti, S\'{e}rafin Jaramillo, Dilyara Salakhova, Marco D'Errico, Guido Caldarelli, for valuable comments and discussions that helped to improve the paper. We also thank participants to the various conferences where earlier versions of this paper were presented. These include the Second Conference on Network Models and Stress Testing for Financial Stability, Banco de M\'{e}xico, September 26-27, 2017, the Conference on Complex System 2017 (CCS 2017), in Canc\'{u}n, September 17-22, 2017, the first and second FINEXUS Conference, Z\"urich (January 2017, January 2018), the 22nd Workshop on Economic Science with Heterogeneous Interacting Agents (WEHIA 2017), Catholic University of Milan, June 12-14, 2017, and the 23rd Computing in Economics and Finance (CEF 2017), Fordham University, New York City, June 28-30, 2017. We also participants to seminars at the Jaume I University, Castell\`{o}n de la Plana (Spain) in November 2017 and to the Catholic University of Milan, in December 2017. All usual disclaimers apply. The authors gratefully acknowledge the financial support of the Horizon 2020 Framework Program of the European Union under the grant agreement No. 640772 - Project DOLFINS (Distributed Global Financial Systems for Society). }


\bibliographystyle{apalike}
\bibliography{references}

\clearpage

\section{Appendix: Proofs of Propositions}
\label{Appendix}

\singlespacing

\footnotesize{
\textbf {Proof of proposition \ref{prop:independence-of-cycles-length}}
  \medskip

\noindent Let us start by remarking that, in all cases shown in Figure (\ref{fig_five_nodes}), we have  $k^{out}_i >0$ ($\forall i=1, 2, 3, 4, 5$), and thus  $\delta_i =1$  ($\forall i=1, 2, 3, 4, 5$) and 
\begin{equation}
\begin{cases}
A^{C^{out}}_{1,t=1}=A^{0^{out}}_1  = \theta_1 A^{0}_{1} \\ 
A^{C^{out}}_{2,t=1} =A^{0^{out}}_2 =  \theta_2 A^{0}_{2}   \\ 
A^{C^{out}}_{3,t=1} =A^{0^{out}}_3  =\theta_3 A^{0}_{3} \\ 
A^{C^{out}}_{4,t=1} =A^{0^{out}}_4 =  \theta_4 A^{0}_{4}  \\ 
A^{C^{out}}_{5,t=1} =A^{0^{out}}_5  =\theta_5 A^{0}_{5}\\ 
\end{cases}
\label {eq_five_nodes_initial}
\end{equation}

In the case of  Figure \ref{fig_five_nodes} (a),  the dynamics of $\{A^{C^{out}}_{i,T}\}_{i=1}^5$ will follow
\begin{equation}
\begin{cases}
A^{C^{out}}_{1,T+1}= A^{0^{out}}_1\\
A^{C^{out}}_{2,T+1}= A^{0^{out}}_2+ (1-h)\theta_2 A^{C^{out}}_{1,T}\\
A^{C^{out}}_{3,T+1}= A^{0^{out}}_3+(1-h)\theta_3 A^{C^{out}}_{2,T} + (1-h) \theta_3   A^{C^{out}}_{5,T}\\ 
A^{C^{out}}_{4,T+1}=  A^{0^{out}}_4  + (1-h) \theta_4 A^{C^{out}}_{3,T}\\ 
A^{C^{out}}_{5,T+1}=  A^{0^{out}}_5  + (1-h) \theta_5  A^{C^{out}}_{4,T}\\ 
\end{cases}
\end{equation}
We obtain 
\begin{equation}
\colvecfive[A^{C^{out}}_{1,T+1}]{A^{C^{out}}_{2,T+1}}{A^{C^{out}}_{3,T+1}} {A^{C^{out}}_{4,T+1}}{A^{C^{out}}_{5,T+1}}= \{ I+ [(1-h)M_a]^1+ [(1-h) M_a]^2+...+ [(1-h)M_a]^T \}
\colvecfive[A^{0^{out}}_1]{ A^{0^{out}}_2}{A^{0^{out}}_3} {A^{0^{out}}_4}{A^{0^{out}}_5},
\label{eq_five_nodes_1}
\end{equation}
with 
 \[M_a=\begin{bmatrix}
    0    &  0    &  0    &  0   & 0 \\
\theta_2    &  0    &  0    &  0   & 0 \\
    0    &  \theta_3     &  0    &  0   &  \theta_3   \\
0      & 0      &     \theta_4      & 0   & 0 \\
0      &   0      &  0      & \theta_5& 0
\end{bmatrix}.\]

In the case of  Figure \ref{fig_five_nodes} (b), the length of the closed cycle is 4, and now the dynamics of $\{A^{C^{out}}_{i,T}\}_{i=1}^5$ is governed by the following system

\begin{equation}
\begin{cases}
A^{C^{out}}_{1,T+1}= A^{0^{out}}_1\\
A^{C^{out}}_{2,T+1}= A^{0^{out}}_2+ (1-h)\theta_2 A^{C^{out}}_{1,T} + (1-h) \theta_2   A^{C^{out}}_{5,T}\\
A^{C^{out}}_{3,T+1}= A^{0^{out}}_3+(1-h)\theta_3 A^{C^{out}}_{2,T} \\ 
A^{C^{out}}_{4,T+1}=  A^{0^{out}}_4  + (1-h) \theta_4 A^{C^{out}}_{3,T}\\ 
A^{C^{out}}_{5,T+1}=  A^{0^{out}}_5  + (1-h) \theta_5  A^{C^{out}}_{4,T}\\ 
\end{cases}
\end{equation}

We obtain 
\begin{equation}
\colvecfive[A^{C^{out}}_{1,T+1}]{A^{C^{out}}_{2,T+1}}{A^{C^{out}}_{3,T+1}} {A^{C^{out}}_{4,T+1}}{A^{C^{out}}_{5,T+1}}= \{ I+ [(1-h)M_b]^1+ [(1-h) M_b]^2+...+ [(1-h)M_b]^T \}
\colvecfive[A^{0^{out}}_1]{ A^{0^{out}}_2}{A^{0^{out}}_3} {A^{0^{out}}_4}{A^{0^{out}}_5},
\label{eq_five_nodes_2}
\end{equation}
with 
 \[M_b=\begin{bmatrix}
    0    &  0    &  0    &  0   & 0 \\
\theta_2    &  0    &  0    &  0   & \theta_2   \\
    0    &  \theta_3     &  0    &  0   &  0   \\
0      & 0      &     \theta_4      & 0   & 0 \\
0      &   0      &  0      & \theta_5& 0
\end{bmatrix}.\]

   \vspace*{1cm}

In the third case, with the the closed cycle of  length 5 (Figure \ref{fig_five_nodes} (c)), the dynamics of $\{A^{C^{out}}_{i,T}\}_{i=1}^5$ is governed by the following system

\begin{equation}
\begin{cases}
A^{C^{out}}_{1,T+1}= A^{0^{out}}_1 + (1-h) \theta_1  A^{C^{out}}_{5,T}\\
A^{C^{out}}_{2,T+1}= A^{0^{out}}_2+ (1-h)\theta_2 A^{C^{out}}_{1,T} \\
A^{C^{out}}_{3,T+1}= A^{0^{out}}_3+(1-h)\theta_3 A^{C^{out}}_{2,T} \\ 
A^{C^{out}}_{4,T+1}=  A^{0^{out}}_4  + (1-h) \theta_4 A^{C^{out}}_{3,T}\\ 
A^{C^{out}}_{5,T+1}=  A^{0^{out}}_5  + (1-h) \theta_5  A^{C^{out}}_{4,T}\\ 
\end{cases}
\end{equation}

We obtain 
\begin{equation}
\colvecfive[A^{C^{out}}_{1,T+1}]{A^{C^{out}}_{2,T+1}}{A^{C^{out}}_{3,T+1}} {A^{C^{out}}_{4,T+1}}{A^{C^{out}}_{5,T+1}}= \{ I+ [(1-h)M_c^1+ [(1-h) M_c]^2+...+ [(1-h)M_c]^T \}
\colvecfive[A^{0^{out}}_1]{ A^{0^{out}}_2}{A^{0^{out}}_3} {A^{0^{out}}_4}{A^{0^{out}}_5},
\label{eq_five_nodes_3}
\end{equation}
with 
 \[M_c=\begin{bmatrix}
    0    &  0    &  0    &  0   & \theta_1 \\
\theta_2    &  0    &  0    &  0   & 0 \\
    0    &  \theta_3     &  0    &  0   &  0   \\
0      & 0      &     \theta_4      & 0   & 0 \\
0      &   0      &  0      & \theta_5& 0
\end{bmatrix}.\]
It can be easily shown that given  $\{ \theta_i \}_{i}=\theta \ (\forall i)$ (i.e. under the condition of homogeneous hoarding), all panels $\alpha=a, b, c$  in Figure (\ref{fig_five_nodes})  create  the same amount of  $S_{\alpha, t}^{out}= \sum_{i=1}^{i=5} A^{C^{out}}_{i, t}$ for all $t\geq 1$.
This is because in all cases of $\alpha$   we have the same dynamics  $S_{\alpha, t+1}^{out}=S_{\alpha, 1}^{out}+  (1-h)\theta S_{\alpha, t}^{out}$  and they also all have the same initial aggregate amount of outgoing collateral, i.e.  $S_{\alpha, 1}^{out}=\theta  \sum_{i=1}^{i=5}A^{0}_{i}$. In addition, in  all panels  of Figure (\ref{fig_five_nodes}) we have that
$\lim_{t \to\infty}S_{\alpha, t}^{out}= \sum_{i=1}^{i=5} \frac {  A^{0^{out}}_i}{1-(1-h) \theta} =   \frac { \theta \sum_{i=1}^{i=5}A^{0}_{i}}{1-(1-h) \theta}$. Thus, at the fixed point solution to $A^{C^{out}}$, $m$ is equal to $\frac {1}{1-(1-h) \theta}$.

\bigskip

\textbf {Proof of proposition \ref{prop:dependence-cycles-length}}
  \medskip
In the case represented by  Figure \ref{fig_five_nodes_2} (a),  $k^{out}_5 =0 \rightarrow  \delta_5 =0$, therefore
\begin{equation}
\begin{cases}
A^{C^{out}}_{1,t=1}=A^{0^{out}}_1  = \theta_1 A^{0}_{1} \\ 
A^{C^{out}}_{2,t=1} =A^{0^{out}}_2 =  \theta_2 A^{0}_{2}   \\ 
A^{C^{out}}_{3,t=1} =A^{0^{out}}_3  =\theta_3 A^{0}_{3} \\ 
A^{C^{out}}_{2,t=1} =A^{0^{out}}_4 =  \theta_4 A^{0}_{4}  \\ 
A^{C^{out}}_{3,t=1} =A^{0^{out}}_5  =0\\ 
\end{cases}
\label{eq_five_nodes_initial_a}
\end{equation}
The dynamics of $\{A^{C^{out}}_{i,T}\}_{i=1}^5$ will follow
\begin{equation}
\begin{cases}
A^{C^{out}}_{1,T+1}= A^{0^{out}}_1 + (1-h) \theta_1  \frac {A^{C^{out}}_{3,T}}{2}\\
A^{C^{out}}_{2,T+1}= A^{0^{out}}_2+ (1-h)\theta_2 A^{C^{out}}_{1,T} \\
A^{C^{out}}_{3,T+1}= A^{0^{out}}_3+(1-h)\theta_3 A^{C^{out}}_{2,T} \\ 
A^{C^{out}}_{4,T+1}=  A^{0^{out}}_4  + (1-h) \theta_4 \frac {A^{C^{out}}_{3,T}}{2}\\ 
A^{C^{out}}_{5,T+1}= 0\\ 
\end{cases}
\label{eq_five_nodes_a1}
\end{equation}

It follows that
\begin{equation}
\colvecfive[A^{C^{out}}_{1,T+1}]{A^{C^{out}}_{2,T+1}}{A^{C^{out}}_{3,T+1}} {A^{C^{out}}_{4,T+1}}{A^{C^{out}}_{5,T+1}}=  [I+[(1-h)M_a]^1+ [(1-h)M_a]^2+...+[(1-h)M_a]^t
]\colvecfive[A^{0^{out}}_1]{ A^{0^{out}}_2}{A^{0^{out}}_3} {A^{0^{out}}_4}{0},
\label{eq_five_nodes_a2}
\end{equation}
with 
 \[M_a=\begin{bmatrix}
    0    &  0   &   \frac {\theta_1} {2}       &  0   & 0 \\
\theta_2   &  0 & 0    &  0   & 0 \\
    0    &  \theta_3   & 0   &  0   & 0  \\
0      & 0      &        \frac {\theta_4} {2}        & 0   & 0 \\
0      &   0      &  0      & 0 & 0
\end{bmatrix}.\]
In the case represented by  Figure \ref{fig_five_nodes_2} (b),  we still have that
\begin{equation}
\begin{cases}
A^{C^{out}}_{1,t=1}=A^{0^{out}}_1  = \theta_1 A^{0}_{1} \\ 
A^{C^{out}}_{2,t=1} =A^{0^{out}}_2 =  \theta_2 A^{0}_{2}   \\ 
A^{C^{out}}_{3,t=1} =A^{0^{out}}_3  =\theta_3 A^{0}_{3} \\ 
A^{C^{out}}_{2,t=1} =A^{0^{out}}_4 =  \theta_4 A^{0}_{4}  \\ 
A^{C^{out}}_{3,t=1} =A^{0^{out}}_5  =0\\ 
\end{cases}
\label{eq_five_nodes_initial_b}
\end{equation}
The dynamics of $\{A^{C^{out}}_{i,T}\}_{i=1}^5$ will follow
\begin{equation}
\begin{cases}
A^{C^{out}}_{1,T+1}= A^{0^{out}}_1 + (1-h) \theta_1  \frac {A^{C^{out}}_{4,T}}{2}\\
A^{C^{out}}_{2,T+1}= A^{0^{out}}_2+ (1-h)\theta_2 A^{C^{out}}_{1,T} \\
A^{C^{out}}_{3,T+1}= A^{0^{out}}_3+(1-h)\theta_3 A^{C^{out}}_{2,T} \\ 
A^{C^{out}}_{4,T+1}=  A^{0^{out}}_4  + (1-h) \theta_4 A^{C^{out}}_{3,T}\\ 
A^{C^{out}}_{5,T+1}=  0\\ 
\end{cases}
\label{eq_five_nodes_b1}
\end{equation}

We obtain 
\begin{equation}
\colvecfive[A^{C^{out}}_{1,T+1}]{A^{C^{out}}_{2,T+1}}{A^{C^{out}}_{3,T+1}} {A^{C^{out}}_{4,T+1}}{A^{C^{out}}_{5,T+1}}=  [I+ [(1-h)M_b]^1+ [(1-h)M_b]^2+...+[(1-h)M_b]^t
]\colvecfive[A^{0^{out}}_1]{ A^{0^{out}}_2}{A^{0^{out}}_3} {A^{0^{out}}_4}{0},
\label{eq_five_nodes_b2}
\end{equation}
with 
 \[M_b=\begin{bmatrix}
    0    &  0   &   0  &  \frac {\theta_1} {2}        & 0 \\
\theta_2   &  0 & 0    &  0   & 0 \\
    0    &  \theta_3   & 0   &  0   & 0  \\
0      & 0      &        \theta_4      & 0   & 0 \\
0      &   0      &  0      & 0  & 0
\end{bmatrix}.\]
Furthermore, in the case represented by  Figure \ref{fig_five_nodes_2} (c), as shown in (\ref{eq_five_nodes_initial}) and  (\ref{eq_five_nodes_3}),  we have 
\[
\begin{cases}
A^{C^{out}}_{1,t=1}=A^{0^{out}}_1  = \theta_1 A^{0}_{1} \\ 
A^{C^{out}}_{2,t=1} =A^{0^{out}}_2 =  \theta_2 A^{0}_{2}   \\ 
A^{C^{out}}_{3,t=1} =A^{0^{out}}_3  =\theta_3 A^{0}_{3} \\ 
A^{C^{out}}_{2,t=1} =A^{0^{out}}_4 =  \theta_4 A^{0}_{4}  \\ 
A^{C^{out}}_{3,t=1} =A^{0^{out}}_5  =\theta_5 A^{0}_{5}\\ 
\end{cases}
\]
Thus
\[
\colvecfive[A^{C^{out}}_{1,T+1}]{A^{C^{out}}_{2,T+1}}{A^{C^{out}}_{3,T+1}} {A^{C^{out}}_{4,T+1}}{A^{C^{out}}_{5,T+1}}= \{ I+ [(1-h)M_c]^1+ [(1-h) M_c]^2+...+ [(1-h)M_c]^T \}
\colvecfive[A^{0^{out}}_1]{ A^{0^{out}}_2}{A^{0^{out}}_3} {A^{0^{out}}_4}{A^{0^{out}}_5},
\label{eq_five_nodes_c}
\]
with 
 \[M_c=\begin{bmatrix}
    0    &  0    &  0    &  0   & \theta_1 \\
\theta_2    &  0    &  0    &  0   & 0 \\
    0    &  \theta_3     &  0    &  0   &  0   \\
0      & 0      &     \theta_4      & 0   & 0 \\
0      &   0      &  0      & \theta_5& 0
\end{bmatrix}.\]
We can see that the main difference between  Figure \ref{fig_five_nodes_2} (a) and  Figure \ref{fig_five_nodes_2} (b)   is mathematically expressed by the difference between  $M_a$ and  $M_b$: in the former case, a part of the initial outgoing collateral from the bank $4$ is flowing into the cycle, while in the later case all outgoing collateral from the bank $4$  will stuck in the box of  the bank $5$  and can not be re-used.  In addition, comparing these cases to the one represented by  Figure \ref{fig_five_nodes_2} (c), we can see that  in   Figure \ref{fig_five_nodes_2} (c) the initial outgoing collateral from each bank can be re-used infinitely.
Defining    $S_{a, t}^{out}$,  $S_{b, t}^{out}$, and   $S_{c, t}^{out}$ are respectively the total amount of outgoing collateral of all banks in Figures \ref{fig_five_nodes_2} (a), (b), and (c) after t times of using and re-using collateral, it can be  proved by induction that  $S_{a, t}^{out} < S_{b, t}^{out} < S_{c, t}^{out} \ (\forall t\geq 2)$.  This implies that in contrast to the example illustrated in Figure (\ref{fig_five_nodes}),  now  longer cycles  will generate more endogenous  collateral.

 \bigskip

\textbf {Proof of proposition \ref{prop:proposition_k_reg}.}

\medskip

To illustrate this proposition, without loss of generality we show an example of  closed cycle of length $N=5$ in Figure (\ref{fig_five_nodes_add_cycles}).  We can add  arbitrary links or cycles of length $k=3, 4$ to the initial graph. In all examples shown in Figure (\ref{fig_five_nodes_add_cycles}), we always have that  $k_i^{out} >0$ for every bank $i$. 
Proposition  (\ref{prop:proposition_k_reg}) is therefore just a
special case of Proposition
\ref{prop:proposition_positive_out_degree_basic}, of which we will provide the proof later.

\begin{figure}[H]
\centering
\captionsetup[subfloat]{farskip=0pt,captionskip=0pt}
\includegraphics[width=15cm]{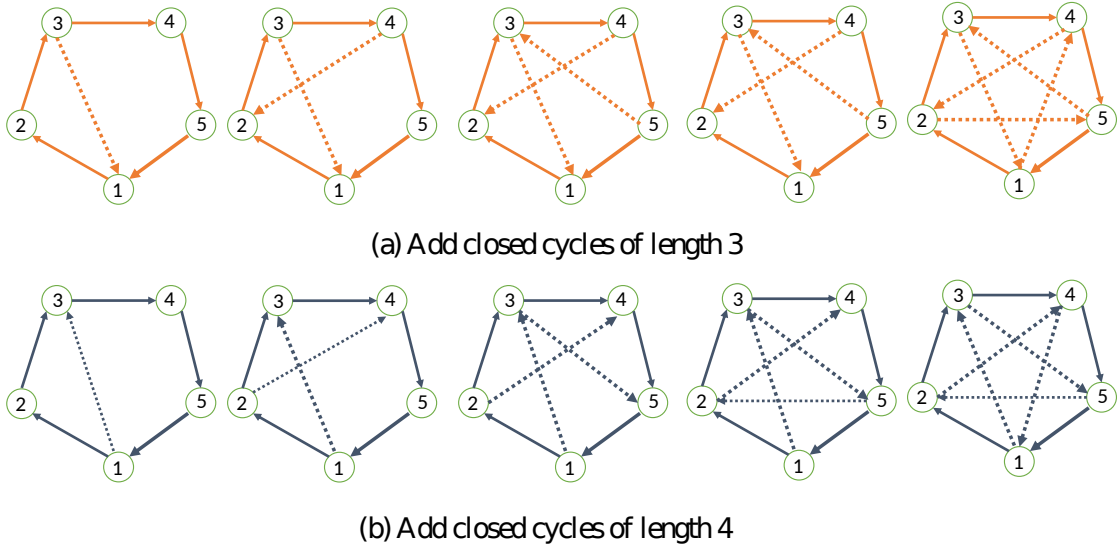}
\caption{Add different cycles (with length $k=3, ...N-1$) to an initial closed graph of size $N$. An example is illustrated with $N=5$ nodes. In the panel (a), we add cycles of length 3.  In the panel (b), we add cycles of length 4.}
\label{fig_five_nodes_add_cycles}
\end{figure}

  \bigskip
\textbf {Proof of proposition \ref{prop:proposition_positive_out_degree_basic}.}
  \medskip

First, since $k_i^{out} >0$, hence  $\delta_i= 1$  ($\forall
i$). Accordingly, we have $A^{C^{out}}_i= \theta
A^{C}_i$ and $ A^{0^{out}}_i= \theta  A^{0}_i$. Therefore,

\[
m= \frac {\sum_{i=1}^{i=N} A^{C^{out}}_i} {\sum_{i=1}^{i=N} A^{0^{out}}_i} = \frac { \theta  \sum_{i=1}^{i=N} A^{C}_i} {\theta \sum_{i=1}^{i=N} A^{0}_i}= \frac {\sum_{i=1}^{i=N} A^{C}_i} {\sum_{i=1}^{i=N} A^{0}_i}.
\]

In addition,  since each column\footnote{Now all columns are non-zero since  $k_i^{out} >0$ for all $i$.} of the matrix of shares  $\mathcal {S}= \{ s_{i\leftarrow j} \}_{NxN}$ is summing to 1 and thus $(1-h) \sum_{i \neq j} w_{i\leftarrow j}=(1-h)\theta$ ($\forall j$), at the fixed point solution to $A^C$, we always have   
\[ \sum_{i=1}^{i=N} A^C_i=(1-h)\theta \sum_{i=1}^{i=N} A^C_i +  \sum_{i=1}^{i=N} A^0_i.\] 
Equivalently,
\[\frac {\sum_{i=1}^{i=N} A^C_i}{\sum_{i=1}^{i=N} A^0_i}= \frac  {1}{1-(1-h)\theta}.\]
Hence,
\begin {equation}
 m=\frac {\sum_{i=1}^{i=N} A^{C^{out}}_i}  {\sum_{i=1}^{i=N} A^{0^{out}}_i}= \frac  {1}{1-(1-h)\theta}.
\label{m_limit_2}
\end {equation}

\bigskip

\textbf {Proof of proposition \ref{prop:proposition_density_rand}.}

   \vspace*{0.2cm}

Recall that under the basic model with homogeneous $\theta$ we have
\[
A^{C}_i=A^{0}_i+ (1-h) \theta  \sum_{j \in B_i} s_{i \leftarrow j} \delta_j A^{C}_j,
\]
where
\[
\begin{cases}
\delta_j= 1, \ \mbox {if} \  k_j^{out}>0\\ 
\delta_j= 0, \ \mbox {if} \  k_j^{out}=0 \\ 
\end{cases}
\]
Therefore, 
\begin {equation}
\E [A^{C}_i]=A^{0}_i+ (1-h) \theta  \E[\sum_{j \in B_i} s_{i \leftarrow j} \delta_j A^{C}_j],
\label{eq_ex_collateral_box_in}
\end {equation}
where the notation  $\E[X]$ stands for the expectation of $X$.  With the shares $s_{i \leftarrow j}$ defined as in equation (\ref{eq_ weights_size_2}) in the main text, we have
\begin {equation}
\E [A^{C}_i]= A^{0}_i+  (1-h) \theta   \sum_{j \in B_i} P( k_j^{in}>0)  \E[\frac {a_{i \leftarrow j}}{ k_j^{out}}  A^{C}_j],  
\label{eq_ex_collateral_box_in_1}
\end {equation}
with $P( k_j^{out}=0)$ is the probability that $ k_j^{out}=0$ and  $P( k_j^{out}>0)$ is the probability that $ k_j^{out}>0$. 

Under random graphs with a probability $p$ of a link between any two nodes (i.e.  $P(a_{i \leftarrow j}=1)  =p$), for every $j$ it is easy to show that $P( k^{out}_j=0)=P( k^{in}_j=0)= (1-p)^{(N-1)}$ and   $P( k^{out}_j>0)=P( k^{in}_j>0)= 1-(1-p)^{(N-1)}$.  In addition,  $\E[k_j^{out}]=\E[k_j^{in}]=p(N-1)$. That leads to
\begin {equation}
\E [A^{C}_i]= A^{0}_i+ [1-(1-p)^{(N-1)}] (1-h) \theta  \sum_{j \neq i} \E[  P(a_{i \leftarrow j}=1)   \frac {a_{i \leftarrow j}}{k_j^{out}}A^{C}_j].
\label{eq_ex_collateral_box_in_rg_1}
\end {equation}
From equation (\ref{eq_ex_collateral_box_in_rg_1}) we have\footnote {We use the approximation $\E[XY] \approx \E[X]\E[Y]$.} 
\begin {equation}
\E [A^{C}_i] \approx  A^{0}_i+ [1-(1-p)^{(N-1)}] (1-h) \theta   \frac {1}{N-1}\sum_{j \neq i} \E[ A^{C}_j]
\label{eq_ex_collateral_box_in_rg_2}
\end {equation}
and the following approximation\footnote{We use the approximation $\E[\frac{X}{Y}] \approx \frac{\E[X]}{\E[Y]}$.}  
\begin {equation}
\E [ \sum_{i=1}^{N} A^{C}_i] \approx \frac  {\sum_{i=1}^{N} A^{0}_i}{1-[1-(1-p)^{(N-1)}] (1-h) \theta}.
\label{eq_ex_collateral_box_in_rg_3}
\end {equation}
Thus,
\begin {equation}
 \lim_{p \to 1}  \frac {\E [ \sum_{i=1}^{N} A^{C}_i]} {\sum_{i=1}^{N} A^{0}_i} = \frac {1} {1-(1-h) \theta}.
\end {equation}

Note that
\begin {equation}
\begin {cases}
\E [A^{C^{out}}_i]= P( k_i^{out}>0) \theta \E [A^{C}_i] =(1-(1-p)^{(N-1)}) \theta \E [A^{C}_i]\\
\E [A^{0^{out}}_i]= P( k_i^{out}>0) \theta A^{0}_i=(1-(1-p)^{(N-1)}) \theta A^{0}_i\\
\end {cases}
\end {equation}
Consequently,
\begin {equation}
\E [ \sum_{i=1}^{N}A^{0^{out}}_i]= [1-(1-p)^{(N-1)}] \theta  \sum_{i=1}^{N} A^{0}_i\\
\label{eq_ex_collateral_box_out_rg_1}
\end {equation}
and
\begin {equation}
\E [ \sum_{i=1}^{N}A^{C^{out}}_i] \approx \frac {[1-(1-p)^{(N-1)}] \theta \sum_{i=1}^{N} A^{0}_i} {1-[1-(1-p)^{(N-1)}] (1-h) \theta}.\\
\label{eq_ex_collateral_box_out_rg_2}
\end {equation}
Hence,
\begin {equation}
 \E [m] \approx  \frac {\E [ \sum_{i=1}^{N} A^{C^{out}}_i]} {\E [  \sum_{i=1}^{N} A^{0^{out}}_i]}= \frac {1} {1-[1-(1-p)^{(N-1)}] (1-h) \theta}
\label{eq_ex_multiplier_collateral_box_out_RG}
\end {equation}
and
\begin {equation}
 \lim_{p \to 1}  \frac {\E [ \sum_{i=1}^{N} A^{C^{out}}_i]} {\E [  \sum_{i=1}^{N} A^{0^{out}}_i]}= \lim_{p \to 1}  \frac {1} {1-[1-(1-p)^{(N-1)}] (1-h) \theta} =\frac {1} {1-(1-h) \theta}.
\label{lim_RG}
\end {equation}

Moreover, 
we can easily show that the approximations for $\E [ \sum_{i=1}^{N} A^{C}_i]$, $\E [ \sum_{i=1}^{N}A^{0^{out}}_i]$, and $\E[m]$ are increasing functions of the network density $p$. In addition, given $p$ in $(0, 1]$, these measures are also increasing functions of the size of the network, $N$. 

  \vspace*{0.2cm}

We now proceed with proving the second part of the proposition. We consider a core-periphery graphs in which: 
(i)  the number of nodes in the core, $N_{core}$,  is fixed; (ii) each node in the periphery has only out-going links, and all point to nodes in the core; (iii) each node in the core are also randomly connected to  each other with the probability $p_{core}$, and there are no directed links from the core to the periphery nodes.

To begin, we first consider  the behavior of nodes in the periphery part  ($\text{Per}$). For every node $j$ in the periphery, we have
\[
\begin{cases}
 k_j^{out} \geq 1\\ 
\delta_j= 1 \\
\end{cases}
\]
and
\[
\begin{cases}
A^{C^{out}}_j= A^{0^{out}}_j= \theta A^{0}_j\\
A^{C}_j=A^{0}_j \\
\end{cases}
\]
Therefore,
\begin {equation}
\sum_{j \in \text{Per}} A^{C^{out}}_j =\theta  \sum_{j \in \text{Per}} A^{0}_j
\label{eq_ex_collateral_box_per_out}
\end {equation}
and
\begin {equation}
\sum_{j \in \text{Per}} A^{C}_j =\sum_{j \in \text{Per}} A^{0}_j. 
\label{eq_ex_collateral_box_per_in}
\end {equation}
Equations  (\ref{eq_ex_collateral_box_per_out}) and  (\ref{eq_ex_collateral_box_per_in}) imply that the aggregate amounts of in-flowing and out-going collateral of all nodes in the periphery remain constant during rehypothecation process. This observation is intuitive since we assume that periphery banks are purely borrowers and therefore they do not receive collateral  from other banks. 

Moving on to the core part,  for every bank $i$ in the core part ($\text{Core}$) we have
\begin {equation}
A^{C}_i=A^{0}_i + (1-h) \theta \sum_{ j \in \text{Per}} s_{i \leftarrow j} \delta_j A^{C}_j + (1-h) \theta \sum_{ j \in \text{Core}} s_{i \leftarrow j} \delta_j A^{C}_j.
\label{eq_ex_collateral_box_in_core_1}
\end {equation}
Taking the expectation from both sides, we have
\begin {equation}
\E[A^{C}_i]=A^{0}_i + (1-h) \theta \sum_{j \in \text{Per}}  s_{i \leftarrow j}  A^{0}_j + (1-h) \theta  \E[\sum_{ j \in \text{Core}} s_{i \leftarrow j} \delta_j A^{C}_j].
\label{eq_ex_collateral_box_in_core_2}
\end {equation}
Defining
\begin {equation}
\tilde {A}^{0}_i= A^{0}_i + (1-h) \theta \sum_{ j \in \text{Per}}  s_{i \leftarrow j}  A^{0}_j,
\label{eq_new_initial_collateral_core}
\end {equation}
then we have 
\begin {equation}
\E[A^{C}_i]=\tilde {A}^{0}_i + (1-h) \theta  \E[\sum_{j \in \text{Core}} s_{i \leftarrow j} \delta_j A^{C}_j].
\label{eq_ex_collateral_box_in_core_3}
\end {equation}
Equations (\ref{eq_new_initial_collateral_core}) and (\ref{eq_ex_collateral_box_in_core_1}) respectively imply two important characteristics of the rehypothecation of collateral under the considered core-periphery structure, i.e. the concentration into the core part and the reuse of collateral among banks in the core.

In addition,  since  each non-zero column of the matrix of shares  $\mathcal {S}= \{ s_{i\leftarrow j} \}_{N\text{x}N}$ is summing to 1, it  is easy to show that
\begin {equation}
\sum_{ i \in \text{Core}} \tilde {A}^{0}_i = \sum_{ i \in \text{Core}} A^{0}_i + (1-h) \theta \sum_{ j \in \text{Per}} A^{0}_j.
\label{eq_initial_box_in_core}
\end {equation}

Since nodes in the core are assumed to be randomly connected with the density $p_{core}$, using the results obtained from random graphs we have
\begin {equation}
\E [ \sum_{i \in \text{Core}} A^{C}_i]=\frac  {\sum_{i \in \text{Core}}\tilde {A}^{0}_i }{1-[1-(1-p)^{(N_{core}-1)}] (1-h) \theta}.
\label{eq_ex_collateral_box_in_core_4}
\end {equation}

Equations (\ref{eq_ex_collateral_box_per_in}) and (\ref{eq_ex_collateral_box_in_core_4}) lead to
\begin {equation}
\E [ \sum_{i=1}^{N} A^{C}_i]= \E [ \sum_{i \in \text{Per}} A^{C}_i] + \E [ \sum_{i \in \text{Core}} A^{C}_i]= \sum_{i \in \text{Per}} A^{0}_j + \frac  {\sum_{i \in \text{Core}}\tilde {A}^{0}_i }{1-[1-(1-p_{core})^{(N_{core}-1)}] (1-h) \theta}.
\label{eq_ex_collateral_box_in_CP}
\end {equation}

Note that, for every node $i$ in the core part
\begin {equation}
\begin {cases}
\E [A^{C^{out}}_i]= P( k_i^{out}>0) \theta \E [A^{C}_i] =[1-(1-p_{core})^{(N_{core}-1)}] \theta \E [A^{C}_i]\\
\E [A^{0^{out}}_i]= P( k_i^{out}>0) \theta A^{0}_i=[1-(1-p_{core})^{(N_{core}-1)}] \theta A^{0}_i\\
\end {cases}
\end {equation}

Therefore
\begin {equation}
\E [ \sum_{i \in \text{Core}} A^{C^{out}}_i]=\frac  {[1-(1-p_{core})^{(N_{core}-1)}]\theta \sum_{i \in \text{Core}}\tilde {A}^{0}_i }{1-[1-(1-p_{core})^{(N_{core}-1)}] (1-h) \theta}
\label{eq_ex_collateral_box_out_core_1}
\end {equation}
and
\begin {equation}
\E [ \sum_{i \in \text{Core}}A^{0^{out}}_i]=[1-(1-p_{core})^{(N_{core}-1)}] \theta \sum_{i \in \text{Core}}A^{0}_i.
\label{eq_ex_collateral_box_out_core_2}
\end {equation}

From  (\ref{eq_ex_collateral_box_per_out}) and (\ref{eq_ex_collateral_box_out_core_1}) we have
\begin {equation}
\E [ \sum_{i=1}^{N} A^{C^{out}}_i]= \E [ \sum_{j \in \text{Per}} A^{C^{out}}_j] + \E [ \sum_{i \in \text{Core}} A^{C^{out}}_i]=\theta  \sum_{j \in \text{Per}} A^{0}_j + \frac  {[1-(1-p_{core})^{(N_{core}-1)}]\theta \sum_{i \in \text{Core}}\tilde {A}^{0}_i }{1-[1-(1-p_{core})^{(N_{core}-1)}] (1-h) \theta},
\label{eq_ex_collateral_box_out_CP_1}
\end {equation}
which can be simplified as 
\begin {equation}
\E [ \sum_{i=1}^{N} A^{C^{out}}_i]= \frac  {\theta  \sum_{j \in \text{Per}} A^{0}_j + [1-(1-p_{core})^{(N_{core}-1)}]\theta \sum_{i \in \text{Core}}A^{0}_i }{1-[1-(1-p_{core})^{(N_{core}-1)}] (1-h) \theta}
\label{eq_ex_collateral_box_out_CP}
\end {equation}
by substituting $\tilde {A}^{0}_i$  in equation (\ref {eq_initial_box_in_core}) into equation (\ref{eq_ex_collateral_box_out_CP_1}).

Moreover, the expectation of the aggregate amount of initial out-going collateral is
\begin {equation}
\E [ \sum_{i=1}^{N} A^{0^{out}}_i]= \E [\sum_{j \in \text{Per}} A^{0^{out}}_j]+ \E [\sum_{i \in \text{Core}} A^{0^{out}}_i]=\theta  \sum_{j \in \text{Per}} A^{0}_j + [1-(1-p_{core})^{(N_{core}-1)}]\theta \sum_{i \in \text{Core}}A^{0}_i.
\label{eq_ex_collateral_box_out_CP_2}
\end {equation}
As results,
\begin {equation}
 \E [m] \approx  \frac {\E [ \sum_{i=1}^{N} A^{C^{out}}_i]} {\E [  \sum_{i=1}^{N} A^{0^{out}}_i]}= \frac {1} {1-[1-(1-p_{core})^{(N_{core}-1)}] (1-h) \theta}
\label{eq_ex_multiplier_collateral_box_out_CP}
\end {equation}
and
\begin {equation}
 \lim_{p_{core} \to 1}  \frac {\E [ \sum_{i=1}^{N} A^{C^{out}}_i]} {\E [  \sum_{i=1}^{N} A^{0^{out}}_i]}= \frac {1}{1- (1-h) \theta}.\label{eq_lim_CP}
\end {equation}

Again, it can be shown that $\E [ \sum_{i=1}^{N} A^{C}_i]$, $\E [ \sum_{i=1}^{N}A^{0^{out}}_i]$ and the approximation for $\E[m]$ are increasing functions of the density of the core $p_{core}$. In addition, these three measures are also increasing functions of the size of the core part, $N_{core}$, given $p_{core}$  in $(0, 1]$. 
Comparing the multiplier estimated for core-periphery graphs, as in equation (\ref{eq_ex_multiplier_collateral_box_out_CP}) and with the one estimated for random graphs as in equation (\ref{eq_ex_multiplier_collateral_box_out_RG}), we notice that the former is always larger than the latter, as long as $p_{core}> 1-(1-p)^{\frac{N-1}{N_{core}-1}}= p_{th}$, thus verifying the third part of the proposition. Finally, by inspecting equations (\ref{lim_RG}) and (\ref{eq_lim_CP}) we verify the fourth and final part of the proposition.

  \vspace*{0.2cm}
\textbf {Proof of proposition \ref{proposition_solution_A_C_VaR}.}
  \vspace*{0.2cm}

We will now provide detailed derivations for the equilibrium existence to non-hoarded parameters determined under the Value-at-Risk strategy.
To begin,  let us start with the following definitions:
\begin{definition}
 For each column vector $X=[X_1, X_2,...,  X_{N-1}, X_N]^T\in\mathbb{R}_{+}^N $,  $X \geq \frac {1}{(1-h)}C^0$  (where  $\mathcal {C}^0=[c^0_1, c^0_2,..., c^0_{N-1}, c^0_N]^T$),  the elements of the matrix $ \mathcal {\tilde {W}}^{\text{VaR}}(X)$ (size $N\text{x}N$) is defined as
\begin {equation}
\begin {cases}
 \mathcal {\tilde {W}}_{i \leftarrow j}^{\text{VaR}}=\frac {(1-h)\text {X}_j- c^0_j}{\text {X}_j \text {k}_j^{out} } , \ \forall i \in L_j \neq  \varnothing\\
 \mathcal { \tilde {W}}_{i \leftarrow  j}^{\text{VaR}}= 0,\  \mbox {elsewhere}\\
\end {cases}
\end {equation}
\end{definition}

We now show that the following system of equations
\begin{equation}
X=A^0+  \mathcal {\tilde {W}}^{\text {VaR}}(X) X
\label{eq_collateral_box_VaR_appendix}
\end{equation}
 has a single unique solution if $0< \text {h} < 1$ and $\mathcal {\tilde {A}}^{-1}\tilde {b} \geq \frac { \mathcal{C}^0}{1-h}$, where  $\tilde {b}$ is a column vector size $N \text{x} 1$,  with
\begin {equation}
\tilde {b}= A^0- \mathcal {S}  \mathcal{C}^0
\end{equation}
and 
\begin {equation}
 \mathcal {\tilde {A}}  =\mathcal {I}-(1-h) \mathcal {S},
\end{equation}
with  the matrix of shares, $\mathcal {S}$, is defined as in equation (\ref{eq_ weights_size_2})  in
 the main text.

Notice that
\[X=A^0+  \mathcal {\tilde {W}} ^{\text {VaR}}(X) X\]
$ \Leftrightarrow $
\begin {equation}
X_i= A^0_i+ \sum_{j \neq i} \frac {a_{i \leftarrow j}}{k_j^{out}} [(1-h) X_j-c^0_j], \forall i=1,2,...N
\end {equation}
$ \Leftrightarrow $
\begin {equation}
 A^0_i - \sum_{j \neq i} \frac {a_{i\leftarrow j}}{k_j^{out}} c^0_j =X_i- \sum_{j \neq i} \frac {(1-h) a_{i\leftarrow j}}{k_j^{out}} X_j,\forall i=1,2,...N.
\label{equation:fix_point_VaR_appendix1}
\end {equation}
Equivalently,
\begin {equation}
 \tilde {b}= \mathcal {\tilde {A}} X
\label{equation:fix_point_VaR_appendix2}.
\end {equation}
If $\mathcal {\tilde {A}}$  is an invertible matrix,  system (\ref {equation:fix_point_VaR_appendix2}) has a single unique solution
\begin {equation}
X= \mathcal {\tilde {A}} ^{-1} \tilde {b}.
\end {equation}
We now show the invertibility of $\mathcal {\tilde {A}}$ by reductio ad absurdum. Suppose that $\mathcal {\tilde {A}}$ is not invertible, then $det( \mathcal {\tilde {A}} )=0$.
Note that  $ \mathcal {\tilde {A}} = I-(1-h) \mathcal {S}$ therefore $det( \mathcal {\tilde {A}} )=0$ if and only if $\frac {1}{1-h} (>1)$ is an eigenvalue of $\mathcal {S}$ . However, since  each non-zero column of   $\mathcal {S}$ is summing to 1, according to  Perron-Frobenius theorem, the largest eigenvalue of  $\mathcal {S}$ can not be  larger than  1.

}

\end{document}